\newtheorem{theorem}{Theorem}[section]
\newtheorem{notation}[theorem]{Notation}
\newtheorem{example}[theorem]{Example}
\newtheorem{proposition}[theorem]{Proposition}
\newtheorem{corollary}[theorem]{Corollary}
\newtheorem{definition}[theorem]{Definition}
\newtheorem{lemma}[theorem]{Lemma}
\newtheorem{remark}[theorem]{Remark}
\newtheorem{question}[theorem]{Question}
\newcommand{\dom}{\mathrm{dom}}
\title{$\delta$-Decidability over the Reals\thanks{This research was sponsored by the National Science Foundation grants no. DMS1068829, no. CNS0926181 and no. CNS0931985, the GSRC under contract no. 1041377 (Princeton University), the Semiconductor Research Corporation under contract no.~2005TJ1366, General Motors under contract no.~GMCMUCRLNV301, and the Office of Naval Research under award no.~N000141010188.}}
\author{Sicun Gao, Jeremy Avigad, and Edmund M. Clarke\\Carnegie Mellon University, Pittsburgh, PA 15213}
\begin{document}
\maketitle

\begin{abstract}
Given any collection $\mathcal{F}$ of computable functions over the reals, we show that there exists an algorithm that, given any $\mathcal{L}_{\mathcal{F}}$-sentence $\varphi$ containing only bounded quantifiers, and any positive rational number $\delta$, decides either ``$\varphi$ is true'', or ``a $\delta$-strengthening of $\varphi$ is false'''. Under mild assumptions, for a $\mathsf{C}$-computable signature $\mathcal{F}$, the $\delta$-decision problem for bounded $\Sigma_k$-sentences in $\mathcal{L}_{\mathcal{F}}$ resides in $\mathsf{{(\Sigma_k^P)}^C}$. The results stand in sharp contrast to the well-known undecidability results, and serve as a theoretical basis for the use of numerical methods in decision procedures for nonlinear first-order theories over the reals. 
\end{abstract}

\section{Introduction}

Tarski's celebrated result~\cite{tarski} that the first-order theory of real arithmetic is decidable has had a profound impact on automated theorem proving, and has generated much attention in application domains such as formal verification, control theory, and robotics~\cite{qeappl}. The hope is that practical problems can be encoded as first-order formulas and automatically solved by decision procedures for the theory. However, in spite of extensive research in optimizing the decision algorithms~\cite{collins}, there is still a wide gap between the state-of-the-art and the majority of problems in practice. One reason is the procedures' high computational complexity: general quantifier elimination, even restricted to a linear signature, has a doubly exponential lower-bound~\cite{BrownD07}. A more fundamental problem is the lack of expressiveness: many problems in the intended domains of application cannot even be expressed in the language of real-closed fields. For instance, Hales' Flyspeck project~\cite{DBLP:conf/dagstuhl/Hales05,DBLP:journals/dcg/HalesHMNOZ10}, which is working on a formal verification of his proof of the Kepler conjecture, requires checking thousands of nonlinear inequalities. The following is typical:
\begin{align*}
\forall \vec x&\in [2, 2.51]^6.\ \Big( -\frac{\pi-4\arctan\frac{\sqrt 2}{5}}{12\sqrt2}\sqrt{\Delta(\vec x)}\\
&+\frac{2}{3}\sum_{i=0}^3\arctan\frac{\sqrt{\Delta(\vec x)}}{a_i(\vec x)}\leq -\frac{\pi}{3}+4\arctan\frac{\sqrt 2}{5}\Big)
\end{align*}
where $a_i(\vec x)$ are all quadratic functions and $\Delta(\vec x)$ is the determinant of a nonlinear matrix. Problems from formal verification and control design can appear all the more challenging because of the use of differential equations, alternating quantifiers, as well as their sheer scale. It is well known that even the set of $\Sigma_1$ sentences in a language extending real arithmetic with the sine function is already undecidable. This seems to indicate that developing general logic-based automated methods in these domains is at its core impossible. Our goal in this paper is to show that a slight change of perspective provides a completely different, and much more positive, outlook.

It is important to note that the theoretical negative results only refer to the problem of deciding logic formulas {\em symbolically and precisely}. In this setting, the numerical computability of real functions remains mostly unexploited. This hardly reflects the wide range of solving techniques in practice. For instance, in the Flyspeck project, the nonlinear formulas are proved using various numerical optimization techniques, including linear programming, interval analysis, and Bernstein approximations. In the field of formal verification of real-time systems, a recent trend in developing decision solvers that incorporate numerical methods has also proved very promising~\cite{HySAT,AkbarpourP09,DBLP:conf/fmcad/Gao10,cordic}. It is natural to ask whether such practices can be theoretically justified in the context of decision problems for first-order theories. Namely, can we give a characterization of the first-order formulas that can be solved using numerically-driven procedures, and if so, bound the complexity of these procedures? Can we formulate a framework for understanding the guarantees that numerically-driven decision procedures can provide? Can we provide general conditions under which a practical verification problem has a satisfactory solution? We answer these questions affirmatively. The key is to shift to a $\delta$-relaxed notion of correctness, which is more closely aligned with the use of numerical procedures. 

An informal description of what we can show is as follows. In a very general signature that contains all the aforementioned real functions, there exists an algorithm such that given an arbitrary sentence $\varphi$ involving only bounded quantifiers, and an arbitrary small numerical parameter $\delta$, one of the following decisions is returned:
\begin{itemize}
\item $\varphi$ is true;
\item The ``$\delta$-strengthening'' of $\varphi$ is false. 
\end{itemize}
The $\delta$-strengthening of a formula, defined below, is a numerical perturbation which makes it slightly harder for the formula to be true. For example, the strengthening of $\exists x\in I. \; x>0$, where $I$ is the bound on the quantifier, is $\exists x\in I. \; x>\delta$. Thus the algorithm reports either that the given formula is true, or that some small perturbation makes it false. These two cases are not mutually exclusive, and in the ``grey area'' where both cases hold the algorithm is allowed to return either value. We refer to this problem (as well as the dual problem defined below using the $\delta$-weakening of formulas) as the ``$\delta$-relaxed decision problem,'' or simply the ``$\delta$-decision problem.'' The restriction to bounded quantifiers is reasonable, since in practical problems real-valued variables are typically considered within some range.  

Here is another way of thinking about our main result. Given a small $\delta$, we can consider the set of first-order sentences with the property that their truth values remain invariant under $\delta$-strengthening (or $\delta$-weakening). Such sentences can be called ``$\delta$-robust,'' in that they do not fall into the ``grey area'' mentioned in the last paragraph. We believe that, in situations like the Flyspeck project where numerical methods are used, it is implicitly assumed that the relevant assertions have this property. Our algorithm, in particular, decides the truth of bounded $\delta$-robust sentences in a general signature. 

Moreover, we show that the $\delta$-decision problems reside in reasonable complexity classes. For instance, if the signature is given by extending arithmetic with $\exp$ and $\sin$, the $\delta$-decision problem for bounded $\Sigma_1$-sentences is ``only'' $\mathsf{NP}$-complete. This should be compared with the undecidability of sentences in this class in the ordinary setting. As another example, the $\delta$-decision problem for arbitrarily-quantified bounded sentences with Lipschitz-continuous ordinary differential equations is $\mathsf{PSPACE}$-complete. The fact that this complexity is not higher than that of deciding quantified Boolean formulas is striking. 

We find this relaxed decision problem particularly suitable for various practical problems. One example is formal verification of real-time systems. With bounded model checking techniques~\cite{DBLP:journals/fmsd/ClarkeBRZ01}, the safety property of a system can be expressed as a first-order sentence. When such a sentence is true, we conclude that the system is safe. Thus, by switching to answering the $\delta$-decision problem, we have the following guarantees. When our algorithm returns that the input sentence is true, we know that the system is indeed safe; otherwise, we know that a $\delta$-strengthening of the sentence is false, which means that under some numerical perturbations, controllable by $\delta$, the system would become unsafe. 

The ``general signature'' we mentioned above refer to arbitrary Type 2 computable functions~\cite{CAbook}. We now formally state our results. Let $\mathcal{F}$ be any collection of Type 2 computable real functions. First, there exists an algorithm such that given any $\mathcal{L}_{\mathcal{F}}$-sentence $\varphi$ containing only bounded quantifiers, and any positive rational number $\delta$, decides the $\delta$-relaxed decision problem. Secondly, suppose all the functions in $\mathcal{F}$ are in a Type 2 complexity class $\mathsf{C}$ (closed under polynomial-time reduction), then the $\delta$-relaxed decision problem for $\Sigma_n$-sentences in $\mathcal{L}_{\mathcal{F}}$ resides in $\mathsf{(\Sigma_n^P)^C}$. Moreover, the relaxations are necessary. Without either boundedness or $\delta$-relaxation, the general problem would remain undecidable. 

\paragraph*{Related Work} Our results are situated with respect to a sizable body of previous work. Ratschan's work \cite{DBLP:journals/jsc/Ratschan02} provided a first study of the effect of numerical perturbations on first-order sentences with continuous functions, where he focused on formulating conditions under which a formula is ``stable under perturbations''. We prove as a side note that robustness in our definition is undecidable in any undecidable theory (and decidable in a decidable theory). In Franek, Ratschan, and Zgliczynski's most recent joint work~\cite{DBLP:conf/mfcs/FranekRZ11}, it is proved that satisfiability of equations with real-analytic functions over compact domains is quasi-decidable (this notion allows the non-termination on non-robust formulas, which we do not). Despite differences in definitions, this in essence agrees with our result restricted to $\Sigma_1$-sentences of the corresponding signature, which is a strict subset of Type 2 computable real functions (Type 2 computable functions can be nowhere differentiable). The quantified cases and complexity were left open in \cite{DBLP:conf/mfcs/FranekRZ11}. There is a line of work studying the notion of robustness in automata theory \cite{DBLP:conf/icalp/AsarinC05,DBLP:conf/csl/Franzle99,DBLP:conf/lics/AsarinB01}, where positive effects on computability of allowing numerical errors are also observed. In computational complexity theory, extensive research has been devoted to how relaxations or approximations affect complexity. The notions are mainly studied with probabilistic setting. It would be interesting to understand its relation to the numerical perturbations we consider. All the mentioned works agree in the direction of formalizing conditions to explain effects of approximations and relaxations in practical approaches to hard problems. We believe our result is the first to prove the decidability and complexity results in the general setting of arbitrary first-order theories of computable real functions. 


The paper is organized as follows. We review the basic properties of computable functions in Section~\ref{pre}. We define the decision problem and state the main theorems in Section~\ref{logic}, \ref{vary}, and \ref{mainintro}, and prove the main theorem in Section~\ref{mainproof}. We then prove complexity results and show that the conditions are necessary for decidability in Section~\ref{complexity} and \ref{negative}. We discuss applications and practical issues in Section~\ref{discu}, and conclude in Section \ref{conclude}. 

\section{Preliminaries}\label{pre}

\subsection{Computable Analysis}

Given a finite alphabet $\Sigma$, let $\Sigma^*$ denote the set of finite strings and $\Sigma^{\omega}$ the set of infinite strings generated by $\Sigma$. For any $s_1, s_2\in \Sigma^*$, $\langle s_1,s_2\rangle$ denotes their concatenation. An integer $i\in \mathbb{Z}$ used as a string over $\{0,1\}$ has its conventional binary representation. The set of {\em dyadic rational numbers} is $\mathbb{D} = \{m/2^n: m\in \mathbb{Z}, n\in \mathbb{N}\}$. 

A {\em (set-) oracle Turing machine} $M$ extends an ordinary Turing machine with a special read/write tape called the {\em oracle tape}, and three special states $q_{\mathit{query}}$, $q_{\mathit{yes}}$, $q_{\mathit{no}}$. To execute $M$, we specify an oracle language $O\subseteq \{0,1\}^*$ in addition to the input $x$. Whenever $M$ enters the state $q_{\mathit{query}}$, it queries the oracle $O$ with the string $s$ on the oracle tape. If $s\in O$, then $M$ enters the state $q_{\mathit{yes}}$, otherwise it enters $q_{\mathit{no}}$. Regardless of the choice of $O$, a membership query to $O$ counts only as a single computation step. A {\em function-oracle Turing machine} is defined similarly except that when the machine enters the query state the oracle (given by a function $f:\{0,1\}^*\rightarrow\{0,1\}^*$) will erase the string $s$ on the query tape and write down $f(s)$. Note that such a machine must take $|f(s)|$ steps to read the output from the query tape. We write $M^O(x)$ (resp. $M^f(x)$) to denote the output of $M$ on input $x$ with oracle $O$ (resp. $f$). 

\paragraph{Computations over Infinite Strings} Standard computability theory studies operations over finite strings and does not consider real-valued functions. Real numbers can be encoded as infinite strings, and a theory of computability of real functions can be developed with oracle machines that perform operations using function-oracles encoding real numbers. This is the approach developed in Computable Analysis, a.k.a., Type 2 Computability. We will briefly review definitions and results of importance to us. Details can be found in the standard references~\cite{CAbook,Kobook,vasco}.

\begin{definition}[Names]
A name of $a\in \mathbb{R}$ is defined as a function $\mathcal{\gamma}_a: \mathbb{N}\rightarrow \mathbb{D}$ satisfying 
$$\forall i\in \mathbb{N}, |\gamma_a(i) - a|<2^{-i}.$$
For $\vec a\in \mathbb{R}^n$, $\gamma_{\vec a}(i) = \langle \gamma_{a_1}(i), ..., \gamma_{a_n}(i)\rangle$.  
\end{definition}
Thus the name of a real number is a sequence of dyadic rational numbers converging to it. For $\vec a\in \mathbb{R}^n$, we write $\Gamma(\vec a) = \{\gamma: \gamma\mbox{ is a name of }\vec a\}$. Noting that names are discrete functions, we can define
\begin{definition}[Computable Reals]
A real number $a\in \mathbb{R}$ is computable if it has a name $\gamma_{a}$ that is a computable function. 
\end{definition}

A real function $f$ is computable if there is a function-oracle Turing machine that can take any argument $x$ of $f$ as a function oracle, and output the value of $f(x)$ up to an arbitrary precision. 

\begin{definition}[Computable Functions]
We say $f:\subseteq\mathbb{R}^n\rightarrow \mathbb{R}$ is computable if there exists a function-oracle Turing machine $\mathcal{M}_f$, outputting dyadic rationals, such that: 
$$\forall \vec x \in \dom(f)\ \forall \gamma_{\vec x}\in \Gamma(\vec x)\ \forall i \in \mathbb{N}.\ |M_f^{\gamma_{\vec x}}(i) - f(\vec x)|<2^{-i}.$$
\end{definition}

In the definition, $i$ specifies the desired error bound on the output of $M_f$ with respect to $f(\vec x)$. For any $\vec x\in \dom(f)$, $M_f$ has access to an oracle encoding the name $\gamma_{\vec x}$ of $\vec x$, and output a $2^{-i}$-approximation of $f(\vec x)$. In other words, the sequence 
$$M_f^{\gamma_{\vec x}}(1), M_f^{\gamma_{\vec x}}(2), ... $$
is a name of $f(\vec x)$. Intuitively, $f$ is computable if an arbitrarily good approximation of $f(\vec x)$ can be obtained using any good enough approximation to any $\vec x\in\dom(f)$.

Most common continuous real functions are computable~\cite{CAbook}. Addition, multiplication, absolute value, $\min$, $\max$, $\exp$, $\sin$ and solutions of Lipschitz-continuous ordinary differential equations are all computable functions. Compositions of computable functions are computable.  

A key property of the above notion of computability is that computable functions over reals must be continuous.
\begin{theorem}[\cite{CAbook}]
Any computable function $f:\subseteq \mathbb{R}^n\rightarrow \mathbb{R}$ is (pointwise) continuous.
\end{theorem}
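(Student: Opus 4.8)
The plan is to exploit the ``finite use'' principle for oracle computations: a computation that halts consults its oracle at only finitely many places. Fix $f$ computable via the machine $M_f$, a point $\vec x\in\dom(f)$, and a target error $\varepsilon>0$; I want to produce $\delta>0$ with $|f(\vec y)-f(\vec x)|<\varepsilon$ whenever $\vec y\in\dom(f)$ and $|\vec y-\vec x|<\delta$.

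First I would choose $i\in\mathbb{N}$ with $2^{-i}<\varepsilon/2$ and fix a particularly well-behaved name $\gamma$ of $\vec x$, one satisfying the stronger bound $|\gamma(j)-\vec x|<2^{-(j+1)}$ for every $j$ (such a ``fast'' name is still a legitimate name, so $M_f^{\gamma}(i)$ is defined and is a $2^{-i}$-approximation of $f(\vec x)$). Since $M_f$ outputs a dyadic rational it halts on $\gamma$, and in its finitely many steps it can write only finitely many distinct query strings on the oracle tape; hence there is $m\in\mathbb{N}$ such that the computation of $M_f^{\gamma}(i)$ queries $\gamma$ only at indices $\le m$. I then set $\delta=2^{-(m+1)}$.

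Next, given $\vec y\in\dom(f)$ with $|\vec y-\vec x|<\delta$, I would build a name $\gamma'$ of $\vec y$ that agrees with $\gamma$ on $\{0,\dots,m\}$: put $\gamma'(j)=\gamma(j)$ for $j\le m$, and for $j>m$ pick any dyadic within $2^{-j}$ of $\vec y$ (possible by density of $\mathbb{D}$). The only thing to check is that $\gamma'$ is a valid name of $\vec y$, and for $j\le m$ this follows from $|\gamma'(j)-\vec y|\le|\gamma(j)-\vec x|+|\vec x-\vec y|<2^{-(j+1)}+2^{-(m+1)}\le 2^{-j}$, which is exactly where the fast-name slack is used. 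Because $\gamma$ and $\gamma'$ coincide on every index the computation inspects, a step-by-step comparison of the two runs gives $M_f^{\gamma'}(i)=M_f^{\gamma}(i)$; hence
$$|f(\vec y)-f(\vec x)|\le|f(\vec y)-M_f^{\gamma'}(i)|+|M_f^{\gamma}(i)-f(\vec x)|<2^{-i}+2^{-i}<\varepsilon.$$
Since $\vec x$ and $\varepsilon$ were arbitrary, $f$ is continuous on $\dom(f)$.

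The only real obstacle is the finite-use principle itself — the claim that a halting oracle computation depends on just a finite initial segment of the oracle — together with the bookkeeping needed to turn ``close in $\mathbb{R}$'' into ``shares a long enough prefix as a name''; the fast-name device absorbs the strict inequalities in the definition of a name, and everything else is routine.
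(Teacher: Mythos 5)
Your proof is correct: the paper does not prove this statement but simply cites \cite{CAbook}, and your use-principle argument (fast name, finite set of queried indices, a name of $\vec y$ agreeing with $\gamma$ on that prefix, identical runs, triangle inequality) is precisely the standard proof given there, with only routine bookkeeping left implicit (componentwise names and the choice of norm on $\mathbb{R}^n$). No gaps worth flagging.
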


Moreover, over any compact set $D\subseteq \mathbb{R}^n$, computable functions are uniform continuous with a {\em computable modulus of continuity,} defined as follows. 

\begin{definition}[Uniform Modulus of Continuity]
Let $f:\subseteq \mathbb{R}^n\rightarrow \mathbb{R}$ be a function and $D\subseteq\dom(f)$ a compact set. The function $m_f: \mathbb{N}\rightarrow \mathbb{N}$ is called a uniform modulus of continuity of $f$ on $D$ if $\forall \vec x,\vec y\in D$, $\forall i\in \mathbb{N}$,
$$||\vec x-\vec y||<2^{-m_f(i)}\mbox{ implies }|f(\vec x)-f(\vec y)|<2^{-i}.$$
\end{definition}

\begin{theorem}[\cite{CAbook}]
Let $f:\subseteq\mathbb{R}^n\rightarrow \mathbb{R}$ be a computable function and $D\subseteq\dom(f)$ a compact set. Then $f$ has a computable uniform modulus of continuity over $D$.
\end{theorem}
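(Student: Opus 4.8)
\emph{Proof proposal.} The plan is to read a uniform modulus of continuity off the oracle machine $M_f$ that witnesses computability of $f$, combining a \emph{locality} observation---a halting oracle computation inspects only finitely much of its oracle---with an \emph{effective compactness} argument on $D$. It is convenient to fix a discrete version of names: say $\gamma$ is a \emph{fast name} of $a$ if $\gamma(j)$ is an integer multiple of $2^{-j-1}$ and $|\gamma(j)-a|<2^{-j-1}$. Every real has a fast name, $M_f$ must by definition halt on all names (fast ones included), and---crucially---over the bounded set $D$, for each $j$ only finitely many dyadics can occur as the value $\gamma(j)$ of a fast name of a point of $D$. Fix $i\in\mathbb{N}$; I will compute $m_f(i)$.

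\emph{Step 1 (locality).} Given $\vec x\in D$, pick any fast name $\gamma_{\vec x}$ and run $M_f^{\gamma_{\vec x}}(i+1)$. This halts, so it makes finitely many oracle queries, at precisions $j_1,\dots,j_k$, reading the values $\gamma_{\vec x}(j_1),\dots,\gamma_{\vec x}(j_k)$; put $J(\vec x)=\max_\ell j_\ell$. Now suppose $\vec y\in D$ with $\|\vec x-\vec y\|<2^{-J(\vec x)-1}$. Let $\gamma_{\vec y}$ be any name of $\vec y$ that agrees with $\gamma_{\vec x}$ at $j_1,\dots,j_k$; this is a legitimate name because $\|\gamma_{\vec x}(j_\ell)-\vec y\|\le\|\gamma_{\vec x}(j_\ell)-\vec x\|+\|\vec x-\vec y\|<2^{-j_\ell-1}+2^{-j_\ell-1}=2^{-j_\ell}$. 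Running $M_f$ on $\gamma_{\vec y}$ reproduces the identical computation and output $q$, and since $|q-f(\vec x)|<2^{-i-1}$ and $|q-f(\vec y)|<2^{-i-1}$, we conclude $|f(\vec x)-f(\vec y)|<2^{-i}$. Thus it suffices to exhibit a computable $J_0=J_0(i)$ with $J(\vec x)\le J_0$ for all $\vec x\in D$, and set $m_f(i)=J_0+1$.

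\emph{Step 2 (effective compactness).} Form the tree $\mathcal{T}$ whose nodes are the finite answer-sequences $(d_1,\dots,d_r)$ such that: the associated query precisions $j_1,\dots,j_r$ (determined inductively by $M_f(i+1)$) satisfy that each $d_\ell$ is a grid point with $\|d_\ell-x\|<2^{-j_\ell-1}$ for a \emph{common} $x\in D$, and $M_f$ has not yet halted after reading $d_1,\dots,d_r$. Each node has finitely many children since $D$ is bounded, so $\mathcal{T}$ is finitely branching. It has no infinite branch: an infinite branch would, for every $r$, admit some $x_r\in D$ compatible with $d_1,\dots,d_r$; compactness of $D$ yields a limit point $x^\ast\in D$ with $\|d_\ell-x^\ast\|<2^{-j_\ell}$ for all $\ell$, hence a name of $x^\ast$ on which $M_f(i+1)$ runs forever---contradicting computability of $f$. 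By K\"onig's lemma $\mathcal{T}$ is finite, so the precisions occurring in it are bounded by some $J_0(i)$, which in particular bounds $J(\vec x)$ for every $\vec x\in D$. Finally, if $D$ is effectively compact---e.g.\ a box with rational endpoints, as in all the situations we consider---then ``there is a common $x\in D$ compatible with $d_1,\dots,d_r$'' is decidable, so $\mathcal{T}$ is a computable finitely-branching well-founded tree; exploring it to exhaustion computes $J_0(i)$, and hence $m_f(i)$. This $m_f$ is a uniform modulus of continuity: if $\vec x,\vec y\in D$ and $\|\vec x-\vec y\|<2^{-m_f(i)}=2^{-J_0(i)-1}\le 2^{-J(\vec x)-1}$, then Step 1 gives $|f(\vec x)-f(\vec y)|<2^{-i}$.

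\emph{Main obstacle.} The existence of a uniform modulus is the standard compactness argument; the real work is making it \emph{computable}. This forces the two design choices above: discretizing names to a fixed grid (so that the space of relevant oracles becomes a finitely-branching computable tree, to which K\"onig's lemma and an effective search apply) and requiring $D$ to be effectively presented (so that the tree's branching relation is decidable). One must also resist the temptation of a naive grid search on $D$: sampling $f$ at finitely many points of $D$ can never by itself control the oscillation of $f$ \emph{between} the samples, whereas the tree argument bounds the oracle precision uniformly over \emph{all} points of $D$ at once.
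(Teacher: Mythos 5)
This statement is quoted background from computable analysis (the paper cites \cite{CAbook} and gives no proof of its own), and your argument is essentially the standard one behind the cited result: the use principle (a halting oracle computation reads only finitely much of its name) combined with compactness of the space of discretized names, organized as a finitely branching tree to which K\"onig's lemma applies. The reasoning is correct, including the final verification that $m_f(i)=J_0(i)+1$ works; the only repairs needed are cosmetic: restrict the tree to \emph{consistent} answer sequences (a repeated query of the same precision must receive the same answer, else an infinite branch need not correspond to any name), and when extracting $J_0$ remember to include the precision of the query issued \emph{at} each non-halted node, not just the precisions already answered, to cover the last query of a halting run. Your closing caveat is also the right one: existence of a uniform modulus needs only compactness, but \emph{computability} of the modulus needs $D$ to be effectively presented (e.g.\ a box with rational or computable endpoints), which is exactly the situation in which the paper invokes the theorem, namely bounded quantifier ranges $[u,v]$; for a completely arbitrary compact $D\subseteq\dom(f)$ the minimal modulus need not be dominated by any computable function, so the literal wording of the statement should be read with that standing assumption.
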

Intuitively, if a function has a computable uniform modulus of continuity, then fixing any desired error bound $2^{-i}$ on the output, we can compute a {\em global} precision $2^{-m_f(i)}$ on the inputs from $D$ such that using any $2^{-m_f(i)}$-approximation of any $\vec x\in D$, $f(\vec x)$ can be computed within the error bound. This suggests the following characterization theorem for computable functions over compact domains:
\begin{theorem}[\cite{Kobook}]
\label{ch}
A real function $f: [0,1]^n\rightarrow \mathbb{R}$ is computable, iff there exists two computable functions $m_f:\mathbb{N}\rightarrow \mathbb{N}$ and $\theta_f: (\mathbb{D}\cap [0,1])^n\times \mathbb{N}\rightarrow \mathbb{D}$ such that
\begin{itemize}
\item $m_f$ is a uniform modulus function for $f$ over $[0,1]^n$, and
\item for all $d\in (\mathbb{D}\cap [0,1])^n$ and all $i\in \mathbb{N}$, $|\theta(d, i)- f(d)|\leq 2^{-i}$. 
\end{itemize}
When the conditions hold, we say $f$ is {\em represented} by $(m_f,\theta_f)$. 
\end{theorem}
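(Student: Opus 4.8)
The plan is to prove the two implications separately, in each case translating between a function-oracle Turing machine for $f$ and the pair $(m_f,\theta_f)$.

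\textbf{The ``only if'' direction.} Suppose $f$ is computed by a function-oracle Turing machine $M_f$. The modulus $m_f$ is supplied directly by the preceding theorem: $[0,1]^n$ is compact and $f$ is computable, so $f$ has a computable uniform modulus of continuity on $[0,1]^n$. For $\theta_f$, the key observation is that every dyadic point $d\in(\mathbb{D}\cap[0,1])^n$ has a trivial, uniformly computable name, namely the constant sequence $\gamma_d(k)=d$ (which satisfies $|\gamma_d(k)-d|=0<2^{-k}$). Define $\theta_f(d,i)=M_f^{\gamma_d}(i)$. A run of $M_f$ on input $i$ with this oracle halts and issues only finitely many oracle queries, each answered by the string $d$, which is computable from the input; hence $(d,i)\mapsto M_f^{\gamma_d}(i)$ is computed by an ordinary Turing machine, and the defining property of $M_f$ gives $|\theta_f(d,i)-f(d)|<2^{-i}$, which in particular is $\le 2^{-i}$.

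\textbf{The ``if'' direction.} Now suppose computable $m_f$ and $\theta_f$ are given as in the statement; we build a function-oracle machine $M_f$ computing $f$. On input $i$ with an oracle $\gamma_{\vec x}$ for some $\vec x\in[0,1]^n$: compute $k=m_f(i+1)+c_n$, where $c_n$ is a fixed offset depending only on $n$ (to absorb the factor relating the componentwise $2^{-k}$ bounds in the definition of a name for a vector to the norm $\|\cdot\|$); query the oracle for $\gamma_{\vec x}(k)$; and let $d$ be the coordinatewise projection of $\gamma_{\vec x}(k)$ into $[0,1]$, so $d\in(\mathbb{D}\cap[0,1])^n$. Since each coordinate of $\vec x$ already lies in $[0,1]$ and projection onto $[0,1]$ is nonexpansive, the per-coordinate bound is preserved, hence $\|d-\vec x\|<2^{-m_f(i+1)}$, and the modulus property yields $|f(d)-f(\vec x)|<2^{-(i+1)}$. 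Finally output $\theta_f(d,i+1)$, which satisfies $|\theta_f(d,i+1)-f(d)|\le 2^{-(i+1)}$; by the triangle inequality $|\theta_f(d,i+1)-f(\vec x)|<2^{-i}$, as required. This estimate does not depend on which name of $\vec x$ was supplied, so $M_f$ witnesses the computability of $f$.

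\textbf{Where the work is.} The precision-offset bookkeeping and triangle inequalities are routine. The two points that need care are: (i) in the forward direction, justifying that $\theta_f$ is an \emph{ordinary} (Type 1) computable function, which rests on the fact that evaluating $M_f$ at a dyadic argument uses only the trivial constant oracle and terminates; and (ii) in the backward direction, the boundary handling — a name of a point in $[0,1]^n$ may have dyadic approximations lying slightly outside the cube, so one must project into $(\mathbb{D}\cap[0,1])^n$ and check that this does not spoil the modulus estimate, which is precisely where the convexity of $[0,1]^n$ is used. I expect (ii) to be the main obstacle to state cleanly, though it is not deep.
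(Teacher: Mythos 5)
This statement is quoted from Ko's book and the paper gives no proof of it, so there is nothing in the paper to compare against; judged on its own, your argument is correct and is the standard textbook proof of this characterization. Both directions are sound: in the forward direction the constant oracle $\gamma_d(k)=d$ does make $\theta_f(d,i)=M_f^{\gamma_d}(i)$ an ordinary (Type~1) computable function, and the modulus is exactly the preceding theorem of the paper; in the backward direction the precision offset for the vector norm and the nonexpansive clipping of the queried dyadic approximation into $[0,1]^n$ handle the only delicate points, and the resulting error bound $2^{-(i+1)}+2^{-(i+1)}$ meets the strict $2^{-i}$ requirement independently of which name of $\vec x$ the oracle supplies.
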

Note that it is important to know the modulus of continuity to compute $f(x)$ for any $x\not\in \mathbb{D}$, since $\theta_f$ only evaluates $f$ on dyadic points.

\paragraph{Complexity of Real Functions} We now turn to complexity issues. The ordinary complexity classes such as $\mathsf{P, NP, \Sigma_k^P, PSPACE}$ for decision problems are defined in the standard way. 

Complexity of real functions is usually defined over compact domains. Without loss of generality, we consider functions over $[0,1]$. Intuitively, a real function $f:[0,1]\rightarrow\mathbb{R}$ is (uniformly) $\mathsf{P}$-computable ($\mathsf{PSPACE}$-computable), if it is computable by an oracle Turing machine $M_{f}$ that halts in polynomial-time (polynomial-space) for every $i\in \mathbb{N}$ and every $\vec x\in \dom(f)$. Formally, we use the following definitions:
\begin{definition}[\cite{Kobook}]
A real function $f: [0,1]^n\rightarrow \mathbb{R}$ is in $\mathsf{P_{C[0,1]}}$ (resp. $\mathsf{PSPACE_{C[0,1]}}$) iff there exists a representation $(m_f, \theta_f)$ of $f$ such that
\begin{itemize}
\item $m_f$ is a polynomial function, and 
\item for any $d\in (\mathbb{D}\cap [0,1])^n$, $e\in \mathbb{D}$, and $i\in \mathbb{N}$, $\theta_f(d,i)$ is computable in time (resp. space) $O((\mathit{len}(d)+i)^k)$ for some constant $k$.
\end{itemize}
\end{definition}
More complexity classes will be defined in Section~\ref{complexity} in a similar way. Most common real functions reside in $\mathsf{P_{C[0,1]}}$: absolute value, polynomials, binary $\max$ and $\min$, $\exp$, and $\sin$ are all in $\mathsf{P_{C[0,1]}}$. It is shown that solutions of Lipschitz-continuous differential equations are computable in $\mathsf{PSPACE_{C[0,1]}}$. In fact, it is shown to be $\mathsf{PSPACE}$-complete in the following sense. 
\begin{definition}[Hardness~\cite{Ko92}]
A real function $f: D\rightarrow \mathbb{R}$ is {\em hard} for complexity class $\mathsf{C}$ if every (discrete) problem $A$ in $\mathsf{C}$ is polynomially reducible to $f$; that is, if there exist two polynomial-time computable functions $g:\{0,1\}^*\rightarrow \mathbb{D}$ and $h:\{0,1\}^*\times \mathbb{D}\rightarrow \{0,1\}$ and a polynomial function $p$, such that $\forall w\in \{0,1\}^*,\forall e\in \mathbb{D}$:
$$\mbox{If }|e-f(g(w))|\leq 2^{-p(n)} \mbox{ then } w\in A\leftrightarrow h(w,e)=1.$$
\end{definition}
\begin{proposition}[\cite{Kawamura09}]
Let $g:[0,1]\times \mathbb{R}\rightarrow \mathbb{R}$ be polynomial-time computable and consider the initial value problem 
$$f(0) = 0, \frac{df(t)}{dt} = g(t, f(t)),\ t\in [0,1].$$
Then computing the solution $f:[0,1]\rightarrow \mathbb{R}$ is in $\mathsf{PSPACE}$. Moreover, there exists $g$ such that computing f is $\mathsf{PSPACE}$-complete. 
\end{proposition}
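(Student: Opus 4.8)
Throughout we use that $g$ is Lipschitz in its second argument, as in the preceding discussion of Lipschitz-continuous ODEs, so that Picard--Lindel\"of furnishes a unique solution $f$ on $[0,1]$; call $L$ its Lipschitz constant (a fixed number, since $g$ is fixed).

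\textbf{Membership.} I would verify that $f$ meets the criterion of Theorem~\ref{ch} with a \emph{polynomial} modulus and a \emph{PSPACE-computable} dyadic evaluator. For the modulus: since $g$ is polynomial-time computable it is continuous, hence bounded on every compact rectangle $[0,1]\times[-R,R]$; a standard a priori bound shows the solution stays in a fixed such rectangle, so $|f'(t)|=|g(t,f(t))|\le M$ on $[0,1]$ for the corresponding bound $M$, whence $f$ is $M$-Lipschitz and has the linear (a fortiori polynomial) modulus $m_f(i)=i+\lceil\log M\rceil$. For $\theta_f$, I would compute $f$ at a dyadic $d\in[0,1]$ by running explicit Euler: pick $m=\mathrm{poly}(i)\ge \mathrm{len}(d)$, set $y_0=0$ and $y_{k+1}=y_k+2^{-m}\,g(t_k,y_k)$ with $t_k=k2^{-m}$, each $y_k$ truncated to $b=\mathrm{poly}(i)$ dyadic bits, and read off $y_k$ at the step with $t_k=d$. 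The textbook Euler estimate (global error $\lesssim 2^{-m}e^{L}$, plus at most $2^{m}\cdot 2^{-b}e^{L}$ from the per-step rounding) shows that for $m$ and $b$ linear in $i$ we have $|y_k-f(d)|<2^{-i}$. The point relevant to the space bound is that the $2^m$ Euler steps are performed \emph{sequentially}: the machine keeps only the current pair $(k,y_k)$ --- a counter of $\mathrm{poly}(i)$ bits and a dyadic of $\mathrm{poly}(i)$ bits --- overwriting the previous one, and each call to the polynomial-time $g$ on a dyadic argument of polynomial bit-length runs, in particular, in polynomial space that is then reclaimed. Hence the whole evaluation uses space $O((\mathrm{len}(d)+i)^k)$ for a constant $k$, so $f\in\mathsf{PSPACE}_{C[0,1]}$.

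\textbf{Hardness.} It suffices to reduce one PSPACE-complete set $A_0$ to the solution of a single ODE in the sense of the Hardness definition, since PSPACE is closed under polynomial-time many-one reductions and these compose with the reduction to a real function. Fix a machine deciding $A_0$ in space $s(n)=n^{c}$; a run on $w$ is a path of length $\le 2^{O(s(|w|))}$ in the configuration graph, and a single transition is a \emph{polynomial-time} map on binary-encoded configurations, which we encode as a map $\Phi$ on the dyadic points of $[0,1]$. The task is to construct a polynomial-time computable $g^\star:[0,1]\times\mathbb{R}\to\mathbb{R}$, Lipschitz in its second argument, whose solution $f^\star$, sampled at suitable dyadic times, iterates $\Phi$ exponentially many times starting from the encoding of the initial configuration of $w$. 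Following Kawamura, I would build $g^\star$ from a self-similar \emph{doubling gadget}: over a dyadic subinterval the vector field drives $f^\star$ from the encoding of a configuration $c$ to the encoding of $\Phi(c)$ (a ``one-step integrator''), and scaled copies of this gadget are nested so that the number of iterations doubles with the depth of the binary expansion of $t$ --- the value $g^\star(t,y)$ being computable in polynomial time from the digits of $t$. The reduction then maps $w$ to a dyadic time $g(w)$ so that $f^\star(g(w))$, to precision $2^{-p(|w|)}$, encodes the relevant bit of the halting configuration, and $h$ decodes it. Together with membership this yields PSPACE-completeness.

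\textbf{Main obstacle.} The membership half is a routine but careful accounting of Euler's method together with the observation that sequential iteration stores only the running value. The real difficulty is the hardness construction: engineering $g^\star$ so as to be simultaneously (i) polynomial-time computable as a Type-2 function, (ii) Lipschitz, so that the IVP is well posed and the above PSPACE upper bound even applies to it, and (iii) able to force the solution through $2^{\mathrm{poly}}$ \emph{faithful} iterations of $\Phi$ without the accumulated encoding error destroying the information. Designing the nested doubling gadget with all three properties, and bounding the error across its exponentially many levels, is where the work lies; here I would follow Kawamura's construction in detail.
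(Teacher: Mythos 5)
First, note that the paper does not prove this proposition at all: it is imported verbatim from Kawamura (the citation is the ``proof''), so there is no internal argument to compare yours against. Judged on its own terms, your proposal splits the right way --- Ko-style Euler simulation for membership, Kawamura's simulation of exponentially long computations for hardness --- but only the first half is actually an argument. Two remarks on that half. (i) You were right to make the Lipschitz hypothesis explicit: as stated the proposition omits it, but both the well-posedness and the $\mathsf{PSPACE}$ upper bound you give depend on it, and it is what the surrounding text (and Kawamura's theorem) intend. (ii) Your error bookkeeping is slightly off: the ``textbook'' global bound $\lesssim 2^{-m}e^{L}$ for Euler presupposes some smoothness of $t\mapsto g(t,f(t))$ beyond what is given. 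With $g$ only continuous in $t$, the one-step error should be bounded by $2^{-m}\,\omega\bigl((1+M)2^{-m}\bigr)$, where $\omega$ is the modulus of continuity of $g$ on the invariant rectangle; since $g$ is polynomial-time computable, $\omega$ is polynomially bounded, so the step count and bit-length remain $\mathrm{poly}(i)$ and the sequential, space-reusing evaluation still lands in $\mathsf{PSPACE}_{C[0,1]}$ via Theorem~\ref{ch}. So the membership half is correct after this routine repair.

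The hardness half, however, is a plan rather than a proof: everything that makes Kawamura's theorem nontrivial is deferred. In particular, your sketch does not address the central tension you yourself flag --- a fixed Lipschitz constant $L$ means the flow can separate trajectories by at most the constant factor $e^{L}$ over $[0,1]$, yet the construction must push the solution through $2^{\mathrm{poly}}$ faithful iterations of the transition map without the encoding being washed out; resolving this (rather than a naive nested ``doubling gadget'') is precisely Kawamura's contribution, and nothing in your outline certifies simultaneously the polynomial-time computability, the Lipschitz bound, and the error control across exponentially many levels. Writing ``here I would follow Kawamura's construction in detail'' leaves the lower bound as a citation --- which is exactly what the paper does, so this is acceptable if the proposition is treated as quoted background, but it should not be presented as a self-contained proof of completeness. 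One last small point: reducing a single $\mathsf{PSPACE}$-complete set and invoking closure under polynomial-time reductions does conform to the paper's Hardness definition, so that step is fine.
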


\section{Bounded Sentences in First-Order Theories with Computable Functions}\label{logic}

We consider first-order formulas with Type 2 computable functions interpreted over the reals. We write $\mathcal{F}$ to denote an arbitrary collection of symbols representing Type 2 computable functions over $\mathbb{R}^n$ for various $n$. We always assume that $\mathcal{F}$ contains at least the constant $0$, unary negation, addition, and the absolute value. (Constants are seen as constant functions.) Let $\mathcal{L_{\mathcal{F}}}$ be the signature $\langle \mathcal{F}, >\rangle$. $\mathcal{L}_{\mathcal{F}}$-formulas are always evaluated in the standard way over the corresponding structure $\mathbb{R}_{\mathcal{F}}= \langle \mathbb{R}, \mathcal{F}, >\rangle$.  

It is not hard to see that we only need to use atomic formulas of the form $t(x_1,...,x_n)>0$ or $t(x_1,...,x_n)\geq 0$, where $t(x_1,...,x_n)$ are built up from functions in $\mathcal{F}$. This follows from the fact that $t(\vec x)=0$ can be written as $-|t(\vec x)|\geq 0$, $t(\vec x)<0$ as $-t(\vec x)>0$, and $t(\vec x)\leq 0$ as $-t(\vec x)\geq 0$. We can then take expressions $s <t $ and $s \leq t$ to abbreviate $t - s > 0$ and $t - s \geq 0$, respectively. Moreover, when a formula is in negation normal form, the negations in front of atomic formulas can be eliminated by replacing $\neg t(\vec x) > 0$ with $-t(\vec x)\geq 0$, and $\neg t(\vec x)\geq 0$ with $-t(\vec x)>0$. In summary, to avoid extra preprocessing of formulas, we give an explicit definition of $\mathcal{L}_{\mathcal{F}}$-formulas as follows.

\begin{definition}[$\mathcal{L}_{\mathcal{F}}$-Formulas]
Let $\mathcal{F}$ be a collection of Type 2 functions, which contains at least $0$, unary negation -, addition $+$, and absolute value $|\cdot|$. We define:
\begin{align*}
t& := x \; | \; f(t(\vec x)), \mbox{ where }f\in \mathcal{F}\mbox{, possibly constant};\\
\varphi& := t(\vec x)> 0 \; | \; t(\vec x)\geq 0 \; | \; \varphi\wedge\varphi \; | \; \varphi\vee\varphi \; | \; \exists x_i\varphi \; |\; \forall x_i\varphi.
\end{align*}
In this setting $\neg\varphi$ is regarded as an inductively defined operation which replaces atomic formulas $t>0$ with $-t\geq 0$, atomic formulas $t\geq 0$ with $-t>0$, switches $\wedge$ and $\vee$, and switches $\forall$ and $\exists$. Implication $\varphi_1\rightarrow\varphi_2$ is defined as $\neg\varphi_1\vee\varphi_2$.
\end{definition}

For notational convenience, from now on we assume that $\mathcal{F}$ always contains all rational constants. 

\begin{definition}[Bounded Quantifiers]
We use the notation of {\em bounded quantifiers}, defined as
\begin{align*}
\exists^{[u,v]}x.\varphi &=_{df}\exists x. ( u \leq x \land x \leq v \wedge \varphi),\\
\forall^{[u,v]}x.\varphi &=_{df} \forall x. ( (u \leq x \land x \leq v) \rightarrow \varphi),
\end{align*}
where $u$ and $v$ denote $\mathcal{L}_{\mathcal{F}}$ terms whose variables only contain free variables in $\varphi$, excluding $x$. It is easy to check that $\exists^{[u,v]}x. \varphi \leftrightarrow \neg \forall^{[u,v]}x. \neg\varphi$. 
\end{definition}

We say a sentence is bounded if it only involves bounded quantifiers. 

\begin{definition}[Bounded $\mathcal{L}_{\mathcal{F}}$-Sentences]
A {\em bounded $\mathcal{L}_{\mathcal{F}}$-sentence} is of the form
$$Q_1^{[u_1,v_1]}x_1\cdots Q_n^{[u_n,v_n]}x_n. \psi(x_1,...,x_n)$$
where $Q_i^{[u_i,v_i]}$s are bounded quantifiers, and $\psi(x_1,...,x_n)$ is a quantifier-free $\mathcal{L}_{\mathcal{F}}$-formula (the matrix). 
\end{definition}

\begin{remark} Note that by the definition of bounded quantifier, in the bound $[u_1,v_1]$ on the first quantifier, the terms $u_1$ and $v_1$ can only be built from constants in $\mathcal{F}$ since there is no other free variables in 
$$Q_2^{[u_2,v_2]}x_2\cdots Q_n^{[u_n,v_n]}x_n.\psi(x_1,...,x_n),$$ excluding $x_1$. 
\end{remark}
We sometimes write a bounded sentence as $\vec Q^{[\vec u,\vec v]}\vec x.\psi(\vec x)$.

\begin{notation}
We will often write a matrix $\psi(x_1,...,x_n)$ as 
$$\psi[t_1(\vec x)>0,...,t_k(\vec x)>0; t_{k+1}(\vec x)\geq 0,...,t_m(\vec x)\geq 0]$$
to emphasize the fact that $\psi(\vec x)$ is a positive Boolean combination of the atomic formulas shown. 
\end{notation}

We use the conventional notations for the alternation hierarchy. Namely, $\Sigma_n$ (resp. $\Pi_n$) denotes the set of all $\mathcal{L}_{\mathcal{F}}$-sentences in prenex form with $n$ alternating {\em quantifier blocks} starting with $\exists$ (resp. $\forall$). 

Since trigonometric functions allow us to encode natural numbers and consequently Diophantine equations, it is well-known that
\begin{proposition}
If $\{+,\times,\sin\}\subseteq \mathcal{F}$, then it is undecidable whether an arbitrary $\Sigma_1$-sentence in $\mathcal{L}_{\mathcal{F}}$ is true.
\end{proposition}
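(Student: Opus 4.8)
The plan is to reduce Hilbert's tenth problem to the problem of deciding $\Sigma_1$-truth in $\mathcal{L}_{\mathcal{F}}$, using nothing about $\sin$ beyond the fact that its real zero set is exactly $\{k\pi : k\in\mathbb{Z}\}$. Recall that, by the negative solution of Hilbert's tenth problem (Matiyasevich, building on Davis, Putnam and Robinson), there is no algorithm that decides, given a polynomial $q\in\mathbb{Z}[z_1,\dots,z_n]$, whether $q(z_1,\dots,z_n)=0$ has a solution in $\mathbb{N}^n$. I would exhibit a computable map sending such a $q$ to a $\Sigma_1$-sentence $\varphi_q\in\mathcal{L}_{\mathcal{F}}$ that holds in $\mathbb{R}_{\mathcal{F}}$ if and only if $q=0$ is solvable over $\mathbb{N}$; any decision procedure for $\Sigma_1$-truth would then decide Hilbert's tenth problem, a contradiction.

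Concretely, writing each equation $s=0$ as the atomic formula $-|s|\geq 0$ (as permitted by the definition of $\mathcal{L}_{\mathcal{F}}$-formulas), set
\begin{align*}
\varphi_q\ :=\ \exists p\,\exists x_1\cdots\exists x_n\ \Big( &p>0 \ \wedge\ p<4 \ \wedge\ \sin(p)=0\\
&\wedge\ \bigwedge_{i=1}^{n}\big(\sin(p\cdot x_i)=0 \ \wedge\ x_i\geq 0\big)\ \wedge\ q(x_1,\dots,x_n)=0\Big).
\end{align*}
This is a prenex existential formula built only from $+$, $\times$, $\sin$, unary negation, absolute value, and rational constants, hence a genuine $\Sigma_1$-sentence of $\mathcal{L}_{\mathcal{F}}$ as soon as $\{+,\times,\sin\}\subseteq\mathcal{F}$, and the map $q\mapsto\varphi_q$ is clearly computable.

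For correctness I would argue as follows. The conjuncts $p>0$, $p<4$, $\sin(p)=0$ force $p=\pi$, since $\pi$ is the only zero of $\sin$ in the interval $(0,4)$ (the next one, $2\pi$, already exceeds $4$). Given $p=\pi$, the conjunct $\sin(\pi\cdot x_i)=0$ holds precisely when $x_i\in\mathbb{Z}$, so together with $x_i\geq 0$ it confines $x_i$ to $\mathbb{N}$. Hence $\varphi_q$ is true in $\mathbb{R}_{\mathcal{F}}$ iff there exist $x_1,\dots,x_n\in\mathbb{N}$ with $q(x_1,\dots,x_n)=0$. Finally, the truth value of $\varphi_q$ is independent of whatever extra symbols $\mathcal{F}$ may contain beyond $\{+,\times,\sin\}$, since those symbols do not occur in $\varphi_q$ and $\mathbb{R}_{\mathcal{F}}$ interprets $+$, $\times$, $\sin$ in the standard way. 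This yields the reduction and hence the proposition.

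I do not expect a genuine obstacle here. The only point needing a little care, as opposed to the textbook shortcut of writing $\sin(\pi x)=0$ directly, is that $\pi$ is not among the rational constants we are guaranteed; the device of the fresh existentially quantified variable $p$ together with the localizing bounds $0<p<4$ is precisely what lets us name $\pi$ within a single $\Sigma_1$-sentence. (One could instead homogenize $q$ to degree $\deg q$ and substitute $x_i\mapsto p\cdot x_i$ throughout, but the formulation above avoids even that.)
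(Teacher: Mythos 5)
Your reduction is correct, and it is exactly the argument the paper has in mind: the proposition is stated there as the well-known fact that $\sin$ lets one define the integers (via $0<p<4\wedge\sin(p)=0$ forcing $p=\pi$, then $\sin(p\,x_i)=0$) and thereby encode Diophantine equations, reducing Hilbert's tenth problem to $\Sigma_1$-truth. Your write-up simply fleshes out the details (expressing equalities as $-|s|\geq 0$, using only the guaranteed symbols and rational constants), all of which check out.
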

In what follows, we show that in contrast to negative results like this (which is further discussed in Section~\ref{negative}), a $\delta$-relaxed version of the decision problem for general $\mathcal{L}_{\mathcal{F}}$-sentences has much better computational properties.

\section{$\delta$-Variants}\label{vary}

In this section we define $\delta$-weakening and $\delta$-strengthening of bounded $\mathcal{L}_{\mathcal{F}}$-sentences, which explicitly introduce syntactic perturbations in a formula. They are used to formalize the notion of $\delta$-relaxed decision problems for $\mathcal{L}_{\mathcal{F}}$-sentences. 

We will write a bound $[u,v]$ as $I$ for short. 

\begin{definition}[$\delta$-Variants]
Let $\delta\in \mathbb{Q}^+\cup\{0\}$, and $\varphi$ a bounded $\mathcal{L}_{\mathcal{F}}$-sentence of the form
$$Q_1^{I_1}x_1\cdots Q_n^{I_n}x_n.\psi[t_i>0; t_j\geq 0],$$
where $i\in\{1,...k\}$ and $j\in\{k+1,...,j\}$. The {\em $\delta$-strengthening} $\varphi^{+\delta}$ of $\varphi$ is defined to be the result of replacing each atomic formula $t_i > 0$ by $t_i > \delta$ and each atomic formula $t_j \geq 0$ by $t_j \geq \delta$, that is,
$$Q_1^{I_1}x_1\cdots Q_n^{I_n}x_n.\psi[t_i>\delta; t_j\geq \delta],$$
where $i\in\{1,...k\}$ and $j\in\{k+1,...,j\}$.
Similarly, the {\em $\delta$-weakening} $\varphi^{-\delta}$ of $\varphi$ is defined to be the result of replacing each atomic formula $t_i > 0$ by $t_i > -\delta$ and each atomic formula $t_j \geq 0$ by $t_j \geq -\delta$, that is,
$$Q_1^{I_1}x_1\cdots Q_n^{I_n}x_n.\psi[t_i>-\delta; t_j\geq -\delta].$$
\end{definition}

Note that in the definition, the bounds on the quantifiers are not changed. In fact, we can talk about $\delta$-variants of unbounded formulas as well, which will be mentioned in Section~\ref{negative}. Note also that $\varphi^{+0}$ and $\varphi^{-0}$ are both equivalent to $\varphi$, and that the notions of strengthening and weakening could have been given a uniform definition by allowing $\delta$ to range over positive and negative numbers. We find it a useful mnemonic, however, to have $\varphi^{+\delta}$ denote a slight strengthening of $\varphi$ (the modified atomic constraints make it slightly harder for $\varphi^{+\delta}$ to be true), and to have $\varphi^{-\delta}$ denote a slight weakening.

\begin{proposition}\label{trivial}
Suppose $\delta,\delta'\in \mathbb{Q}^+\cup\{0\}$ satisfy $\delta\geq\delta'$. Then we have: 
\begin{enumerate}
\item $\varphi^{+\delta}\rightarrow\varphi^{+\delta'} \rightarrow \varphi \rightarrow \varphi^{-\delta'}\rightarrow \varphi^{-\delta}.$
\item (Duality) $\neg(\varphi^{+\delta})\leftrightarrow (\neg\varphi)^{-\delta}$.
\end{enumerate}
\end{proposition}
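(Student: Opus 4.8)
The plan is to prove both parts by induction on the structure of the bounded sentence $\varphi$, peeling off one (bounded) quantifier at a time down to the quantifier-free matrix. The base case is the matrix $\psi$, which is a positive Boolean combination (only $\wedge$ and $\vee$, by our explicit definition of $\mathcal{L}_{\mathcal{F}}$-formulas) of atomic formulas $t_i>0$ and $t_j\geq 0$; the quantifier case then lifts the implications through $\exists^{I}$ and $\forall^{I}$, using that bounded quantification over a fixed bound $I$ preserves implications (it is monotone in the matrix) and that the $\delta$-variant operation leaves the bounds $I$ untouched, so the bound formulas $u\leq x\wedge x\leq v$ are unaffected.

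For part (1), the heart of the matter is the single atomic comparison. If $\delta\geq\delta'\geq 0$, then for any real value $a=t_i(\vec x)$ we have the chain of implications $a>\delta \;\Rightarrow\; a>\delta' \;\Rightarrow\; a>0 \;\Rightarrow\; a>-\delta' \;\Rightarrow\; a>-\delta$, and likewise with $\geq$ in place of $>$; this is exactly the assertion $\psi^{+\delta}\to\psi^{+\delta'}\to\psi\to\psi^{-\delta'}\to\psi^{-\delta}$ at the atomic level. Positive Boolean combinations preserve implication (both $\wedge$ and $\vee$ are monotone in each argument), so the chain lifts to the matrix $\psi$. Then I would show, by induction on $n$, that if $\psi\to\psi'$ (as formulas with the same free variables, valid in $\mathbb{R}_{\mathcal{F}}$) then $Q^{I}x.\psi \to Q^{I}x.\psi'$ for both $Q=\exists$ and $Q=\forall$: for $\exists$, a witness $x$ in $I$ making $\psi$ true makes $\psi'$ true; for $\forall$, every $x$ in $I$ satisfying the bound and making $\psi$ true also makes $\psi'$ true. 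Applying this through the quantifier prefix $Q_1^{I_1}\cdots Q_n^{I_n}$ gives the full implication chain for $\varphi$. The only mild subtlety is keeping track that the $\delta$-variant of a bounded sentence $Q_1^{I_1}x_1\cdots Q_n^{I_n}x_n.\psi$ is precisely $Q_1^{I_1}x_1\cdots Q_n^{I_n}x_n.\psi^{\pm\delta}$, i.e., the perturbation commutes with stripping quantifiers, which is immediate from the definition since only atomic formulas of the matrix are altered.

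For part (2), the duality $\neg(\varphi^{+\delta})\leftrightarrow(\neg\varphi)^{-\delta}$, I would again induct on structure, but now tracking how the inductively-defined negation operation $\neg$ interacts with $+\delta$ and $-\delta$. At the atomic level: $\neg(t_i>\delta)$ is, by definition of $\neg$, the formula $-t_i\geq -\delta$; on the other hand $\neg(t_i>0)$ is $-t_i\geq 0$, and its $\delta$-weakening is $-t_i\geq -\delta$ — these agree. Similarly $\neg(t_j\geq\delta)$ is $-t_j>-\delta$, matching the $\delta$-weakening of $\neg(t_j\geq 0)=(-t_j>0)$. So $\neg(\psi^{+\delta})$ and $(\neg\psi)^{-\delta}$ agree on atoms. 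For the inductive step, $\neg$ swaps $\wedge\leftrightarrow\vee$ and $\exists\leftrightarrow\forall$ and leaves bounds $I$ fixed, while $+\delta$ and $-\delta$ act only inside the matrix and commute with all connectives and quantifiers; combining these with the induction hypothesis $\neg(\psi_1^{+\delta})\leftrightarrow(\neg\psi_1)^{-\delta}$ (and likewise for $\psi_2$) gives the claim for $\psi_1\wedge\psi_2$, $\psi_1\vee\psi_2$, $\exists^{I}x.\psi_1$, $\forall^{I}x.\psi_1$. Here one should note $\neg(\exists^{I}x.\psi)\leftrightarrow\forall^{I}x.\neg\psi$ is the already-observed de Morgan law for bounded quantifiers.

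I do not expect a genuine obstacle here — the statement is essentially bookkeeping — but the one place requiring care is the interface between the \emph{syntactic} definition of $\delta$-variants and negation (which are defined on the Notation-level matrix form $\psi[t_i>0;t_j\geq 0]$) and the \emph{semantic} implications asserted in $\mathbb{R}_{\mathcal{F}}$. Concretely, one must check that the $\delta$-strengthening is well-defined regardless of how the matrix is presented as a positive Boolean combination of atoms, and that $\neg$ as the inductively-defined operation genuinely produces a formula semantically equivalent to the classical negation, so that the $\leftrightarrow$ in part (2) is the intended one. Once that is pinned down, both parts follow from the atomic-level facts above propagated through monotone connectives and bounded quantifiers by a routine induction on the length of the quantifier prefix and the structure of $\psi$.
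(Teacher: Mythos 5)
Your proposal is correct and amounts to a careful write-out of exactly what the paper treats as immediate: Proposition~\ref{trivial} is proved in the paper with the single line ``This follows immediately from the definitions,'' and your structural induction (atomic monotonicity of $a>\delta \Rightarrow a>\delta' \Rightarrow a>0 \Rightarrow a>-\delta' \Rightarrow a>-\delta$, monotonicity of $\wedge,\vee$ and bounded quantifiers, and the atom-level check of how the inductively defined $\neg$ interacts with $\pm\delta$) is precisely the routine unpacking being alluded to. No gap; your flagged subtlety about the syntactic form of atoms (e.g.\ reading $t>\delta$ as $t-\delta>0$ when applying $\neg$) is handled correctly since the equivalences asserted are semantic over $\mathbb{R}_{\mathcal{F}}$.
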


This follows immediately from the definitions. 

We say that a sentence is $\delta$-robust if its truth value remains invariant under $\delta$-weakening.

\begin{definition}[$\delta$-Robustness]\label{rob} Let $\delta\in\mathbb{Q}^+\cup\{0\}$ and $\varphi$ be a bounded $\mathcal{L}_{\mathcal{F}}$-sentence. We say $\varphi$ is $\delta$-robust, if $\varphi^{-\delta}\rightarrow \varphi$. We say $\varphi$ is robust if it is $\delta$-robust for some $\delta\in \mathbb{Q}^+$.
\end{definition} 

More precisely, we can say that a formula $\varphi$ is {\it robust under $\delta$-weakening} if it has this property, and define the analogous notion of being {\it robust under $\delta$-strengthening}. The two notions have similar properties; for simplicity, we will restrict attention to the first notion below.

By Proposition~\ref{trivial}, we always have $\varphi \rightarrow \varphi^{-\delta}$, so $\varphi$ is $\delta$-robust if and only if we have $\varphi \leftrightarrow \varphi^{-\delta}$. Since $\varphi^{-\delta}\rightarrow \varphi$ is equivalent to $\lnot \varphi^{-\delta} \vee \varphi$, saying that $\varphi$ is robust is equivalent to saying that either $\varphi$ is true or $\varphi^{-\delta}$ is false. Intuitively, this means that either $\varphi$ is true, or ``comfortably'' false in the sense that no small perturbation makes it true.

\begin{proposition}\label{true} Let $\varphi$ be a bounded $\mathcal{L}_{\mathcal{F}}$-sentence, and $\delta,\delta'\in \mathbb{Q}^+\cup\{0\}$.

1. If $\varphi$ is true, then it is $\delta$-robust for any $\delta$. 

2. Suppose $\delta\geq\delta'$. If $\varphi$ is $\delta$-robust, then it is $\delta'$-robust. 
\end{proposition}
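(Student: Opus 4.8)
The plan is to obtain both parts as immediate consequences of Proposition~\ref{trivial}, since $\delta$-robustness is defined solely through the implication $\varphi^{-\delta}\rightarrow\varphi$, and Proposition~\ref{trivial}(1) already supplies the monotonicity chain $\varphi^{+\delta}\rightarrow\cdots\rightarrow\varphi\rightarrow\varphi^{-\delta'}\rightarrow\varphi^{-\delta}$ for $\delta\geq\delta'$.

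For part~1, I would simply note that the assertion ``$\varphi$ is $\delta$-robust'' is the assertion that $\varphi^{-\delta}\rightarrow\varphi$ holds. If $\varphi$ is true, this implication is true regardless of its antecedent and regardless of $\delta$, so $\varphi$ is $\delta$-robust for every $\delta\in\mathbb{Q}^+\cup\{0\}$. (As a sanity check matching the remark after Definition~\ref{rob}: Proposition~\ref{trivial}(1) always gives $\varphi\rightarrow\varphi^{-\delta}$, so for a true $\varphi$ we in fact get $\varphi\leftrightarrow\varphi^{-\delta}$.)

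For part~2, assume $\delta\geq\delta'$ and that $\varphi$ is $\delta$-robust, i.e.\ $\varphi^{-\delta}\rightarrow\varphi$ holds. Applying Proposition~\ref{trivial}(1) to the pair $\delta\geq\delta'$ gives $\varphi^{-\delta'}\rightarrow\varphi^{-\delta}$. Chaining the two implications yields $\varphi^{-\delta'}\rightarrow\varphi$, which is precisely the statement that $\varphi$ is $\delta'$-robust.

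There is no real obstacle; the only point to keep in mind is the degenerate case $\delta'=0$, where $\varphi^{-0}$ is by definition equivalent to $\varphi$, so $0$-robustness holds trivially and is consistent with the chain above. In short, the whole proposition follows in one line from the definition of $\delta$-robustness together with Proposition~\ref{trivial}.
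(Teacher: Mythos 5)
Your proof is correct and follows the paper's own route: part~1 is immediate from the definition of $\delta$-robustness (the consequent $\varphi$ is true), and part~2 chains $\varphi^{-\delta'}\rightarrow\varphi^{-\delta}$ from Proposition~\ref{trivial} with the assumed $\varphi^{-\delta}\rightarrow\varphi$. The paper states exactly this argument, only more tersely, so nothing is missing or different in substance.
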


\begin{proof}
By the observations above, the first is immediate, and the second follows from Proposition~\ref{trivial}. 
\end{proof}

\begin{remark}
Note that the negation of a robust sentence may be non-robust. 
\end{remark}

Now we are ready to state our main results. 

\section{The Main Theorem}\label{mainintro}

\begin{theorem}\label{main}
There is an algorithm which, given any bounded $\mathcal{L}_{\mathcal{F}}$-sentence $\varphi$ and $\delta\in \mathbb{Q}^+$, correctly returns one of the following two answers:
\begin{itemize}
\item ``$\mathsf{True}$'': $\varphi$ is true. 
\item ``$\delta$-$\mathsf{False}$": $\varphi^{+\delta}$ is false. 
\end{itemize}
\end{theorem}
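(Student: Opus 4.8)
### Proof Proposal

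\textbf{Overall strategy.} The plan is to reduce the $\delta$-decision problem for an arbitrary bounded sentence to a finite computation by discretizing each bounded quantifier over a sufficiently fine grid of dyadic points, and then evaluating the (perturbed) matrix at those grid points using the representations $(m_t,\theta_t)$ guaranteed by Theorem~\ref{ch}. The two admissible answers are deliberately "one-sided with slack" — reporting \textsf{True} only when we have actually verified $\varphi$, and reporting $\delta$-\textsf{False} only when we have verified $\varphi^{+\delta}$ fails — so the grid need not resolve the exact truth value; it only needs to be fine enough that the $\delta$ of slack absorbs all the approximation error. Concretely, I would process the quantifier prefix from the inside out, maintaining at each stage a quantifier-free predicate on the remaining free variables that we can evaluate approximately, and handling $\exists$ by a finite disjunction over grid points and $\forall$ by a finite conjunction.

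\textbf{Key steps, in order.} First, fix the sentence $\varphi = Q_1^{I_1}x_1\cdots Q_n^{I_n}x_n.\,\psi[t_i>0;\,t_j\ge 0]$ and $\delta\in\mathbb{Q}^+$. Since all quantifiers are bounded and the bounds $u_\ell,v_\ell$ are themselves terms in preceding variables ranging over compact boxes, a simple induction shows all variables range over an \emph{a priori} computable compact box $D\subseteq\mathbb{R}^n$; restrict every term $t$ occurring in $\psi$ to $D$ and obtain a common computable uniform modulus of continuity $m$ for all of them (Theorem on computable moduli over compact sets, plus Theorem~\ref{ch}). Second, choose an integer precision $p$ with $2^{-p}<\delta/2$, and set the grid spacing to $2^{-m(p)}$ so that moving any input by at most the spacing changes every $t$ by less than $\delta/2$. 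Third, define an \emph{approximate evaluator}: for a dyadic point $d$ in the grid, compute each $\theta_{t}(d,p)$, which is within $2^{-p}<\delta/2$ of $t(d)$; declare the approximate atom "$t>0$'' to hold iff $\theta_t(d,p)>\delta/2$ (and similarly for $\ge$), and evaluate the positive Boolean combination $\psi$ on these approximate atoms. Fourth, process the prefix: for $Q_n=\exists$ replace it by the finite disjunction of the evaluator over all grid points $d_n\in I_n$; for $Q_n=\forall$ by the finite conjunction; iterate outward, where at each outer stage the bounds of the current quantifier are evaluated (approximately, to the same precision) at the already-fixed outer grid points. Finally, if the whole finite expression evaluates to true, output \textsf{True}; otherwise output $\delta$-\textsf{False}.

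\textbf{Correctness.} Two directions. If the algorithm outputs \textsf{True}, I would argue that $\varphi$ itself is true: every approximate atom that the evaluator accepted, say "$t>\delta/2$ on $\theta_t(d,p)$'', certifies $t(d)>\delta/2-2^{-p}>0$ at the genuine value, and since $\psi$ is built only from $\wedge,\vee$ (a positive combination) these certified atoms propagate up to a genuine witness/verification of $\psi$ at the chosen grid points, hence of $\varphi$ (grid points are legitimate elements of the bounding boxes; the $\exists$/$\forall$ structure is respected because disjunction under-approximates $\exists$ and for $\forall$ we will need the grid to be a $2^{-m(p)}$-net — see the obstacle below). Conversely, if the algorithm does \emph{not} output \textsf{True}, I claim $\varphi^{+\delta}$ is false: suppose for contradiction $\varphi^{+\delta}$ holds; then using the modulus $m$, each true atom "$t\ge\delta$'' of $\varphi^{+\delta}$ at a real point is witnessed, at the nearest grid point $d$ (within $2^{-m(p)}$), by $t(d)>\delta-\delta/2=\delta/2$, hence by $\theta_t(d,p)>\delta/2-2^{-p}>0$ — wait, we need $\theta_t(d,p)>\delta/2$; tightening $p$ so $2^{-p}<\delta/4$ and using spacing for error $<\delta/4$ gives $\theta_t(d,p)>\delta-\delta/4-\delta/4=\delta/2$, so the approximate atom is accepted; again positivity of $\psi$ propagates this to make the finite expression true, contradicting that we did not output \textsf{True}. (For $\exists$ in $\varphi^{+\delta}$ one must check the real witness can be replaced by a nearby grid point while keeping the strengthened atoms true — this is exactly where the $\delta$ slack against the modulus is spent; for $\forall$ the grid being a subset of the reals makes the finite conjunction weaker, which is the direction we want here.)

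\textbf{Main obstacle.} The delicate point is the interaction between the two quantifier types and the discretization, i.e.\ ensuring the \emph{same} grid simultaneously under-approximates the $\forall$'s (grid $\subseteq$ box, automatically fine) and adequately covers the $\exists$'s so that a real witness of the strengthened sentence has a grid point near it at which the $\delta/2$-relaxed atoms still hold — and making this argument go through \emph{uniformly under an alternating prefix}, where the bounds of inner quantifiers themselves depend on outer variables and so shift as we move to grid points. I expect to handle this by a careful induction on the prefix length: the inductive hypothesis should say that the finite (grid) evaluation of the suffix $Q_{\ell}^{I_\ell}x_\ell\cdots$ as a function of the outer dyadic point $d_{<\ell}$ (i) implies the genuine suffix formula at $d_{<\ell}$, and (ii) is implied by the $\delta$-strengthened genuine suffix formula at \emph{any} real point within $2^{-m(p)}$ of $d_{<\ell}$ — the "within $2^{-m(p)}$'' buffer being what lets an outer $\exists$'s real witness be snapped to the grid at the next step out. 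The base case is the approximate evaluator's two properties established above, and each inductive step is a routine case split on $Q_\ell\in\{\exists,\forall\}$ once the buffer bookkeeping is set up correctly; the bound boxes and moduli being computable makes the whole procedure an algorithm. I would also need the minor observation that a bounded quantifier with an \emph{empty} range (when the evaluated bounds satisfy $u>v$) makes $\exists$ vacuously false and $\forall$ vacuously true, handled trivially by the empty disjunction/conjunction.
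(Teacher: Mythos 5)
Your overall plan---grid the bounded quantifiers using a common computable modulus of continuity, evaluate atoms through the $\theta_t$ approximations with a $\delta/2$ threshold, and report one-sidedly---is viable in spirit, and it is genuinely different from the paper's route: the paper converts $\varphi$ into a single $\mathcal{L}_{\mathcal{F}_m}$-term $\alpha(\varphi)$ by replacing $\wedge,\vee,\exists,\forall$ with binary and bounded $\min/\max$ (Lemma~\ref{kk}), invokes the standard computable-analysis fact that bounded optimization preserves computability, and then merely compares a $\delta'/4$-approximation of the computable constant $\alpha(\varphi)$ against $\delta'/2$. Your construction is essentially an unrolled, hand-made proof of that computability fact, so in principle it can be pushed through.

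There is, however, a genuine gap, and it sits exactly where you locate the ``main obstacle'': the quantifier bounds $u_\ell,v_\ell$ are terms in the outer variables, so at a grid point they are computable reals that you can only approximate. Two consequences. First, your ``minor observation'' that an empty range (``when the evaluated bounds satisfy $u>v$'') is handled by an empty disjunction or conjunction is not an algorithmic step: deciding $u>v$ for computable reals from approximations is impossible, and since the $\delta$-strengthening never perturbs the bounds, a sentence such as $\exists^{[u,v]}x.\,(1>0)$ ties the required answer to that exact comparison. (The paper quietly defines this corner away by interpreting $[u,v]$ as $[v,u]$ when $v<u$, or by adding $u\le v$ as an explicit atomic constraint subject to perturbation; your proof has no analogous device.) Second, and more structurally, your inductive hypothesis (i) demands that acceptance of the grid evaluation at $d_{<\ell}$ imply the genuine suffix formula at $d_{<\ell}$; for an existential block this forces the chosen grid witnesses for $x_\ell$ to lie \emph{exactly} inside $[u(d_{<\ell}),v(d_{<\ell})]$, and symmetrically your completeness argument for $\forall$ assumes the grid is a subset of the true interval---neither containment can be certified from approximations of the endpoints. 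The repair is to weaken the invariant so that accepted grid points may lie within a small $\epsilon$ of the true interval and to reserve part of the $\delta$ slack which, via the modulus, lets you slide such a point to the nearest point of the genuine interval (and lets near-endpoint witnesses of $\varphi^{+\delta}$ be captured), with these $\epsilon$'s budgeted across all $n$ quantifier levels; alternatively, track the numerical $\min/\max$ value over the grid rather than Boolean acceptance, where continuity of the optimum in the interval endpoints does the work---which is precisely the paper's reduction. As written, the induction you sketch does not close.
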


Note that the two cases can overlap. If $\varphi$ is true and $\varphi^{+\delta}$ is false, then the algorithm is allowed to return either one. 

\begin{corollary}
There is an algorithm which, given any bounded $\varphi$ and $\delta\in \mathbb{Q}^+$, correctly returns one of the following two answers:
\begin{itemize}
\item ``$\delta$-$\mathsf{True}$'': $\varphi^{-\delta}$ is true. 
\item ``$\mathsf{False}$'': $\varphi$ is false. 
\end{itemize} 
\end{corollary}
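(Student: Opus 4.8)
The plan is to derive the corollary directly from Theorem~\ref{main} by a duality argument, using Proposition~\ref{trivial}(2) to convert strengthenings into weakenings. The key observation is that the two answers in the corollary are exactly the negations of the two answers in Theorem~\ref{main}, applied to the negated sentence. So given a bounded $\mathcal{L}_{\mathcal{F}}$-sentence $\varphi$ and $\delta\in\mathbb{Q}^+$, I would first form $\neg\varphi$ (which, by the inductive definition of negation given after the definition of $\mathcal{L}_{\mathcal{F}}$-formulas, is again a bounded $\mathcal{L}_{\mathcal{F}}$-sentence of the required syntactic shape, since negation merely swaps $>0$ with $-t\geq 0$, swaps the connectives, and swaps the quantifiers, leaving the bounds intact).

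Next I would run the algorithm of Theorem~\ref{main} on the input $(\neg\varphi,\delta)$. It returns one of two answers. If it returns ``$\mathsf{True}$'', then $\neg\varphi$ is true, i.e. $\varphi$ is false, and I output ``$\mathsf{False}$''. If it returns ``$\delta$-$\mathsf{False}$'', then $(\neg\varphi)^{+\delta}$ is false; by the duality clause of Proposition~\ref{trivial}, $(\neg\varphi)^{+\delta}\leftrightarrow\neg(\varphi^{-\delta})$, so $\neg(\varphi^{-\delta})$ is false, i.e. $\varphi^{-\delta}$ is true, and I output ``$\delta$-$\mathsf{True}$''. In either case the output is correct, and the procedure terminates because the algorithm of Theorem~\ref{main} does. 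As in the theorem, the two cases may overlap (if $\varphi$ is false and $\varphi^{-\delta}$ is true, either answer is permissible), which is consistent with the non-exclusive reading of the two output labels.

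I should double-check one bookkeeping point: that $\neg\varphi$ really is a \emph{bounded} sentence in the sense required by Theorem~\ref{main}. Using the definition of bounded quantifiers, $\neg(\exists^{[u,v]}x.\psi)$ is $\forall^{[u,v]}x.\neg\psi$ and dually, so negation commutes with bounded quantification without altering the bounds $[u,v]$; pushing negation down to the atoms yields a matrix that is again a positive Boolean combination of atomic formulas of the allowed forms $t>0$ and $t\geq 0$. Hence $\neg\varphi$ is a legitimate input to the algorithm of Theorem~\ref{main}. There is no real obstacle here — the entire content of the corollary is Theorem~\ref{main} together with Proposition~\ref{trivial}; the only thing to be careful about is applying the duality identity $(\neg\varphi)^{+\delta}\leftrightarrow\neg(\varphi^{-\delta})$ in the correct direction and tracking which output label corresponds to which case.
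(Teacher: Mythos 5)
Your proposal is correct and is essentially the paper's own proof: apply the algorithm of Theorem~\ref{main} to $\neg\varphi$ and use the duality of Proposition~\ref{trivial} (instantiated at $\neg\varphi$) to translate ``$(\neg\varphi)^{+\delta}$ is false'' into ``$\varphi^{-\delta}$ is true''. The extra check that $\neg\varphi$ remains a bounded $\mathcal{L}_{\mathcal{F}}$-sentence is a sensible detail the paper leaves implicit.
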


\begin{proof}
Apply the previous algorithm to $\neg\varphi$. Proposition~\ref{trivial}, we have $\neg(\varphi)^{+\delta}\leftrightarrow (\neg\varphi)^{-\delta}$. So if $\neg\varphi$ is $\mathsf{True}$ we can report that $\varphi$ is $\mathsf{False}$, and if $\neg\varphi$ is $\delta$-$\mathsf{False}$ we can report that $\varphi$ is $\delta$-$\mathsf{True}$.
\end{proof}

\begin{corollary}[Robustness implies decidability]\label{point}
There is an algorithm that, given $\delta\in \mathbb{Q}^+$ and a bounded $\delta$-robust $\varphi$, decides whether $\varphi$ is true or false. 
\end{corollary}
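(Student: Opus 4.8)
The plan is to run the dual algorithm of the preceding corollary on the given sentence $\varphi$ with the given parameter $\delta$, and then to use the robustness hypothesis to turn its answer into an outright decision. Recall that on input $(\varphi,\delta)$ that algorithm always halts and correctly returns either ``$\delta$-$\mathsf{True}$'', certifying that $\varphi^{-\delta}$ is true, or ``$\mathsf{False}$'', certifying that $\varphi$ is false. So the only thing to supply is a way of concluding something about $\varphi$ from the first branch.

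For this I would first record the elementary observation that $\delta$-robustness pins down the truth value of $\varphi$ in terms of that of its $\delta$-weakening: by Proposition~\ref{trivial} we always have $\varphi\rightarrow\varphi^{-\delta}$, and Definition~\ref{rob} supplies the converse implication $\varphi^{-\delta}\rightarrow\varphi$, so $\varphi\leftrightarrow\varphi^{-\delta}$. Consequently, if the dual algorithm answers ``$\delta$-$\mathsf{True}$'' we know $\varphi^{-\delta}$ is true, hence $\varphi$ is true, and we report ``true''; if it answers ``$\mathsf{False}$'' we report ``false''. Since no sentence is both true and false, whichever branch the algorithm takes, the reported value is the correct truth value of $\varphi$, and the composite procedure terminates because the one it calls does. (One need not even worry about the overlap region of the dual algorithm: under $\delta$-robustness the overlap case, ``$\varphi^{-\delta}$ true and $\varphi$ false'', is impossible, but correctness of the decision does not depend on this.)

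I do not expect a genuine obstacle here; the only subtlety is getting the \emph{direction} of the relaxation right. One might be tempted to apply Theorem~\ref{main} to $\varphi$ instead, but its ``$\delta$-$\mathsf{False}$'' branch only certifies that $\varphi^{+\delta}$ is false, and since $\varphi^{+\delta}\rightarrow\varphi$ this tells us nothing about $\varphi$ unless $\varphi$ is assumed robust \emph{under $\delta$-strengthening} rather than under $\delta$-weakening. By the Remark above one also cannot fix this by passing to $\neg\varphi$, since the negation of a robust sentence need not be robust. Thus the dual algorithm, whose uncertain branch concerns exactly $\varphi^{-\delta}$, is the appropriate tool, and $\delta$-robustness as in Definition~\ref{rob} is precisely the hypothesis that closes the gap.
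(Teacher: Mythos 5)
Your proposal is correct and follows essentially the same route as the paper: apply the dual algorithm (the one whose branches are ``$\delta$-$\mathsf{True}$: $\varphi^{-\delta}$ is true'' and ``$\mathsf{False}$: $\varphi$ is false'') and use $\delta$-robustness, i.e.\ $\varphi^{-\delta}\rightarrow\varphi$, to upgrade the first branch to ``$\varphi$ is true''. Your added remarks on why the strengthening-based branch of Theorem~\ref{main} would not suffice are accurate but go beyond what the paper's one-line proof needs.
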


\begin{proof}
Apply the previous algorithm to $\varphi$. By the definition of $\delta$-robustness, if $\varphi$ is $\delta$-$\mathsf{True}$, then it is $\mathsf{True}$.  
\end{proof}

\begin{corollary}
Let $L$ be a class of bounded $\mathcal{L}_{\mathcal{F}}$-sentences. Suppose it is undecidable whether an arbitrary sentence in $L$ is true. Then it is undecidable, given any $\delta\in \mathbb{Q}^+$, whether an arbitrary bounded $\mathcal{L}$-sentence is $\delta$-robust. 
\end{corollary}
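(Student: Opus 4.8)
The plan is to reduce the truth problem for sentences in $L$ to the $\delta$-robustness problem, so that decidability of the latter would contradict the assumed undecidability of the former. Fix $\delta\in\mathbb{Q}^+$. Given an arbitrary bounded $\mathcal{L}_{\mathcal{F}}$-sentence $\varphi\in L$, I want to construct, uniformly and computably in $\varphi$, a bounded sentence $\widehat\varphi$ such that $\widehat\varphi$ is $\delta$-robust if and only if $\varphi$ is true (or, at least, such that from the $\delta$-robustness status of $\widehat\varphi$ one can read off the truth value of $\varphi$). If such a reduction exists, then a decision procedure for $\delta$-robustness would yield a decision procedure for truth in $L$, contradicting the hypothesis; hence $\delta$-robustness is undecidable on bounded $\mathcal{L}$-sentences.

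The key construction: note that by Proposition~\ref{true}.1, every true sentence is automatically $\delta$-robust, so the content is entirely in arranging that \emph{false} instances of $\varphi$ become \emph{non}-$\delta$-robust. A natural device is to take a simple ``witness'' sentence $\beta_\delta$ that is false but whose $\delta$-weakening is true — for instance something of the shape $\forall^{[0,1]}x.\ (x - \tfrac{\delta}{2} \geq 0)$, whose $\delta$-weakening $\forall^{[0,1]}x.\ (x - \tfrac{\delta}{2} \geq -\delta)$ is true while the sentence itself is false, so $\beta_\delta$ is not $\delta$-robust. Then set $\widehat\varphi := \varphi \vee \beta_\delta$, or more carefully a sentence whose atomic constraints are exactly those of $\varphi$ together with those of $\beta_\delta$ (after renaming bound variables so the quantifier blocks do not interfere). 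One checks: if $\varphi$ is true then $\widehat\varphi$ is true, hence $\delta$-robust; if $\varphi$ is false then $\widehat\varphi$ is logically equivalent to $\beta_\delta$ while $\widehat\varphi^{-\delta}$ is equivalent to $\beta_\delta^{-\delta}$, which is true — so $\widehat\varphi^{-\delta}\to\widehat\varphi$ fails and $\widehat\varphi$ is not $\delta$-robust. Thus $\widehat\varphi$ is $\delta$-robust iff $\varphi$ is true, and $\varphi\mapsto\widehat\varphi$ is a computable map producing a bounded $\mathcal{L}$-sentence, completing the reduction.

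The main obstacle I anticipate is bookkeeping around the definition of $\delta$-weakening rather than anything deep: the weakening operation acts simultaneously on \emph{all} atomic formulas of the compound sentence, so when I combine $\varphi$ with $\beta_\delta$ I must be sure that perturbing the atoms coming from $\varphi$ cannot rescue a false $\varphi$ into making $\widehat\varphi^{-\delta}$ true ``for the wrong reason'' — but since $\widehat\varphi^{-\delta}$ is already true via the $\beta_\delta$-disjunct alone, that is harmless; the only real care is that perturbing $\varphi$'s atoms does not make the \emph{$\beta_\delta$ part} interact, which is avoided by keeping $\beta_\delta$ on disjoint variables and as a separate disjunct. A secondary subtlety is whether the ambient class $L$ for which truth is undecidable consists of sentences in the same signature $\mathcal{L}_{\mathcal{F}}$ as the robustness question — the statement says ``$\mathcal{L}$-sentence'', so I will read $\mathcal{L} = \mathcal{L}_{\mathcal{F}}$ and note that $\beta_\delta$ only uses rational constants, order, addition and negation, all of which are assumed present in $\mathcal{F}$, so $\widehat\varphi$ stays within the same language. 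One should also record the converse remark (decidable theories have decidable robustness), but that is not needed for this corollary.
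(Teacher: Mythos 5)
Your proof is correct, but it takes a genuinely different route from the paper's. The paper gives a Turing reduction that leans on the decidability machinery already developed: given a purported decider for $\delta$-robustness, apply it to $\varphi\in L$ itself; if $\varphi$ is not $\delta$-robust it must be false (the contrapositive of Proposition~\ref{true}), and if it is $\delta$-robust then Corollary~\ref{point} (hence the main theorem's algorithm) decides its truth — contradicting undecidability of truth in $L$. You instead give a direct many-one reduction: pad $\varphi$ with a fresh-variable disjunct $\beta_\delta$ that is false but whose $\delta$-weakening is true, so that $\widehat\varphi=\varphi\vee\beta_\delta$ (prenexified) is $\delta$-robust exactly when $\varphi$ is true; your checks of both directions, including the point that weakening $\varphi$'s own atoms is harmless because the weakened $\beta_\delta$-disjunct already makes $\widehat\varphi^{-\delta}$ true, are sound, and $\beta_\delta$ stays inside the signature since rational constants, addition and negation are assumed. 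What each approach buys: yours is self-contained — it needs neither the main theorem nor Corollary~\ref{point}, and it yields an effective (indeed polynomial-time) many-one reduction, so robustness-hardness is established independently of any $\delta$-decidability result. The paper's argument is shorter given the machinery and, because it never modifies the input sentence, it actually proves the stronger fact that $\delta$-robustness is undecidable even when restricted to sentences of $L$ itself; your reduction produces sentences $\widehat\varphi$ that generally lie outside $L$, which is fine for the statement as written (robustness of arbitrary bounded $\mathcal{L}_{\mathcal{F}}$-sentences, the reading you adopt), but would need $L$ to be closed under your padding if one wanted the restricted version.
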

\begin{proof}
Let $\varphi$ be an arbitrary $\mathcal{L}_{\mathcal{F}}$-sentence from $L$. Suppose there exists an algorithm that decides whether $\varphi$ is $\delta$-robust. Then, we can first decide whether $\varphi$ is $\delta$-robust. If it is not, then following Proposition~\ref{true}, $\varphi$ has to be false. On the other hand, if it is, then following Corollary~\ref{point} it is decidable whether $\varphi$ is true. Consequently combining the two algorithms we can decide whether $\varphi$ is true. This contradicts the undecidability of sentences in $L$.
\end{proof}

This can be contrasted with the simple fact that if $\mathbb{R}_{\mathcal{F}}$ has a decidable theory, then it is decidable whether any bounded $\mathcal{L}_{\mathcal{F}}$-sentence is robust, since the condition in Definition~\ref{rob} is just another bounded $\mathcal{L}_{\mathcal{F}}$-sentence. 

In the next section we prove the main theorem, and determine the complexity of the algorithm in the following section. 

\section{Proof of the Main Theorem}\label{mainproof}

We now prove the decidability of the $\delta$-decision problems. First, any $\mathcal{F}$ can be extended it as follows. 
\begin{definition}[$m$-Extension] Let $\mathcal{F}$ be a collection of computable functions over reals. We define the $m$-extension of $\mathcal{F}$, written as $\mathcal{F}_m$, to be the closure of $\mathcal{F}$ with the following functions:
\begin{itemize}
\item Binary min and max: $\min(\cdot,\cdot), \max(\cdot,\cdot)$; 
\item Bounded min and max: 
$$\min\{t(\vec x, \vec y): y_1\in [u_1,v_1],...,y_n\in [u_n, v_n]\},$$
$$\max\{t(\vec x, \vec y): y_1\in [u_1,v_1],...,y_n\in [u_n, v_n]\},$$ 
where $u_i$ and $v_i$ denote arbitrary $\mathcal{L}_{\mathcal{F}_m}$-terms that do not involve $y_i$. 
\end{itemize}
\end{definition}

It is a standard result in computable analysis that applying minimization and maximization over a bounded interval preserves computability. 
(This is studied in detail in Chapter 3 of \cite{Kobook}.) Thus all functions in $\mathcal{F}_{m}$ are computable. We can write the bounded min and max as $\min_{\vec x\in D}(t(\vec x, \vec y))$ and $\max_{\vec x\in D}(t(\vec x, \vec y))$ for short, where $D=[u_1,v_1]\times\cdots\times [u_n,v_n]$. For technical reasons that will become clear in Section~\ref{complexity}, we interpret $[u,v]$ as $[v,u]$ when $v < u$; one can rule out this interpretation by adding $u \leq v$ as an explicit constraint in the formula.


Now we define a notion that allows us to switch between strict and nonstrict inequalities in the $\delta$-decision problem. 

\begin{definition}[Strictification]
Suppose $\varphi$ is the formula
$$\vec Q^{\vec I}\vec x. \psi[t_1>0,...,t_k>0; t_{k+1}\geq 0,...,t_m\geq 0].$$
We say $\varphi$ is strict (resp. nonstrict), if $m=k$ (resp. $k=0$), i.e., all the inequalities occurring in $\varphi$ are strict (resp. nonstrict). The {\em strictification of $\varphi$} is defined to be 
$$\mathit{st}(\varphi):\ \vec Q^{\vec I}\vec x. \psi[t_1>0,...,t_k>0, t_{k+1}> 0,...,t_m>0],$$
that is, the result of replacing all the nonstrict inequalities by strict ones. The {\em destrictification of $\varphi$} is 
$$\mathit{de}(\varphi):\ \vec Q^{\vec I}\vec x. \psi[t_1\geq0,...,t_k\geq0, t_{k+1}\geq 0,...,t_m\geq0],$$
this is, the result of replacing all strict inequalities by nonstrict ones.
\end{definition}

Note that the bounds on the quantifiers are not changed in the definition. The following fact follows directly from the definition.  

\begin{proposition}\label{tr2}
We have 
\begin{itemize}
\item $\mathit{st}(\varphi)\rightarrow\varphi$ and $\varphi\rightarrow\mathit{de}(\varphi)$.
\item (Duality) $\mathit{st}(\neg\varphi)$ is equivalent to $\neg\mathit{de}(\varphi)$. 
\end{itemize}
\end{proposition}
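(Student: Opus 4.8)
The plan is to derive both bullets directly from the single structural observation underlying the notation $\psi[\,t_1>0,\dots;\,t_{k+1}\geq0,\dots\,]$: the matrix of a bounded $\mathcal{L}_{\mathcal{F}}$-sentence is a \emph{positive} (monotone) Boolean combination of its atomic formulas, with no negations in front of atoms, so strengthening or weakening individual atoms strengthens or weakens the whole sentence.

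For the first bullet I would first record the pointwise implication between atoms: for every assignment of $x_1,\dots,x_n$ we have $t>0 \rightarrow t\geq 0$, so replacing a nonstrict atom $t_j\geq 0$ by the strict atom $t_j>0$ only makes it harder to satisfy, while strict atoms are left unchanged by $\mathit{st}$. Since $\psi$ has no negations in front of atoms, a routine induction on the structure of $\psi$ gives $\mathit{st}(\psi)\rightarrow\psi$ under every assignment. Because $\mathit{st}$ alters neither the quantifier prefix $\vec Q$ nor the bounds $\vec I$, a second induction — on the length of the prefix, using monotonicity of $\exists^{[u,v]}x$ and $\forall^{[u,v]}x$ in their body — lifts this to $\mathit{st}(\varphi)\rightarrow\varphi$. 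The implication $\varphi\rightarrow\mathit{de}(\varphi)$ is entirely symmetric, using the same fact $t>0\rightarrow t\geq 0$ read in the opposite direction, since $\mathit{de}$ only weakens strict atoms to nonstrict ones.

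For the duality bullet I would simply unfold the three syntactic operations and check that the two sides are literally the same prenex sentence. Writing $\varphi$ as $\vec Q^{\vec I}\vec x.\,\psi[\,t_1>0,\dots,t_k>0;\,t_{k+1}\geq0,\dots,t_m\geq0\,]$, the inductive definition of $\neg$ flips each quantifier while keeping its bound (the bounded case being exactly the equivalence $\exists^{[u,v]}x.\chi\leftrightarrow\neg\forall^{[u,v]}x.\neg\chi$ noted after the definition of bounded quantifiers), swaps $\wedge$ and $\vee$ inside $\psi$, and turns each $t_i>0$ into $-t_i\geq0$ and each $t_j\geq0$ into $-t_j>0$; applying $\mathit{st}$ afterwards then makes \emph{every} atom strict, namely $-t_i>0$ for all $i$. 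On the other side, $\mathit{de}(\varphi)$ first makes every atom of $\varphi$ nonstrict ($t_i\geq0$ for all $i$), and $\neg$ then flips the same quantifiers and connectives and turns each $t_i\geq0$ into $-t_i>0$. The two results have the same prefix, the same De Morgan dual of the propositional skeleton of $\psi$, and the same list of atoms $-t_i>0$, so they coincide, which in particular yields the claimed equivalence.

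I do not expect a genuine obstacle: the argument is pure bookkeeping. The only point needing a word of care is that $\mathit{st}$, $\mathit{de}$, and $\neg$ all leave the quantifier bounds $\vec I$ untouched (as emphasized after the definitions), so no atoms arising from the bound constraints $u\le x\le v$ are ever altered, and both inductions as well as the syntactic comparison above go through without exception.
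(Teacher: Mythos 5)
Your argument is correct and matches the paper's intent: the paper simply asserts that Proposition~\ref{tr2} ``follows directly from the definition,'' and your monotonicity induction for the first bullet and the syntactic unfolding of $\mathit{st}$, $\mathit{de}$, and $\neg$ for the duality are exactly the routine verification being left implicit. Your added remark that none of the three operations touches the quantifier bounds (so the atoms $u\leq x$, $x\leq v$ arising from unfolding bounded quantifiers are never altered) is the one point of care, and you handle it correctly.
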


Now we prove the key lemma. It establishes that any bounded $\mathcal{L}_{\mathcal{F}}$-sentence can be expressed as an atomic formula in the extended signature $\mathcal{L}_{\mathcal{F}_m}$. 

\begin{lemma}\label{kk}
Let $\varphi$ be a bounded $\mathcal{L}_{\mathcal{F}}$-sentence. There is an $\mathcal{L}_{\mathcal{F}_m}$-term $\alpha(\varphi)$ that satisfies:
\begin{itemize}
\item $\mathit{de}(\varphi)\leftrightarrow\alpha(\varphi)\geq 0$, and $\mathit{st}(\varphi)\leftrightarrow\alpha(\varphi)> 0$;
\item $\mathit{de}(\varphi^{+\delta})\leftrightarrow\alpha(\varphi)\geq \delta$, and $\mathit{st}(\varphi^{+\delta})\leftrightarrow \alpha(\varphi)>\delta$.   
\end{itemize}
\end{lemma}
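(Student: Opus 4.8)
The plan is to define $\alpha(\varphi)$ by recursion on the structure of the bounded sentence $\varphi$, translating each logical connective and each bounded quantifier into a corresponding arithmetic operation available in $\mathcal{F}_m$. The guiding idea is the standard "fuzzy logic" / Łukasiewicz-style reading of formulas as real-valued terms whose sign (and, more precisely, whose value relative to $0$ and to $\delta$) records the truth value: an atomic formula $t > 0$ or $t \geq 0$ is translated to the term $t$ itself; a conjunction $\varphi_1 \wedge \varphi_2$ is translated to $\min(\alpha(\varphi_1), \alpha(\varphi_2))$; a disjunction to $\max(\alpha(\varphi_1), \alpha(\varphi_2))$; a bounded existential $\exists^{[u,v]} x.\, \varphi'$ to $\max_{x \in [u,v]} \alpha(\varphi')$; and a bounded universal $\forall^{[u,v]} x.\, \varphi'$ to $\min_{x \in [u,v]} \alpha(\varphi')$. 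Since $\mathcal{F}_m$ is by definition closed under binary $\min$/$\max$ and under bounded $\min$/$\max$ over intervals whose endpoints are $\mathcal{L}_{\mathcal{F}_m}$-terms, $\alpha(\varphi)$ is a legitimate $\mathcal{L}_{\mathcal{F}_m}$-term; one has to check that the bound terms $u_i, v_i$ appearing in $\varphi$ are admissible as interval endpoints in the bounded $\min$/$\max$ constructors, which holds because the bounded-quantifier definition only allows $u_i, v_i$ to mention variables free in the scope, exactly the variables still "open" at that point in the recursion.

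First I would set up the recursion and state the translation clauses as above, being careful that for a bounded quantifier $Q^{[u,v]} x$ the translated interval is $[u,v]$ (recalling the paper's convention that $[u,v]$ is read as $[v,u]$ when $v<u$, so the bounded $\min$/$\max$ is always over a genuine nonempty compact interval and hence well-defined and computable). Then the core of the argument is a simultaneous induction establishing the four biconditionals. The base case is immediate: for $t > 0$ we have $\mathit{st}(t>0)$ is $t>0$ and $\mathit{de}(t>0)$ is $t \geq 0$, matching $\alpha = t$; likewise the $+\delta$ versions give $t > \delta$ and $t \geq \delta$. For the inductive step on $\wedge$ and $\vee$ one uses the elementary facts $\min(a,b) \geq c \iff a \geq c \wedge b \geq c$, $\min(a,b) > c \iff a > c \wedge b > c$, and the dual statements for $\max$; crucially these hold with $c = 0$ and with $c = \delta$ uniformly, which is exactly why the single term $\alpha(\varphi)$ simultaneously captures $\varphi$, $\varphi^{+\delta}$, and both their strictifications. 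For the quantifier step one uses that over a compact interval $D$ the supremum is attained, so $\max_{x \in D} g(x) \geq c \iff \exists x \in D.\, g(x) \geq c$ and $\max_{x\in D} g(x) > c \iff \exists x \in D.\, g(x) > c$ (the latter using attainment of the max, not just the sup, so the strict inequality transfers), together with the inductive hypothesis applied to the subformula — here one must track that $\varphi^{+\delta}$ is formed by perturbing atoms, and perturbing the atoms of $\exists^{[u,v]}x.\varphi'$ is the same as $\exists^{[u,v]} x.\, (\varphi')^{+\delta}$, so the recursion commutes with $(\cdot)^{+\delta}$ and the bounds are untouched, consistent with the definition of $\delta$-strengthening.

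The step I expect to be the main obstacle is the bounded-quantifier case, for two intertwined reasons. First, one must verify that attainment of the extremum on the compact interval is genuinely available: the subterm $\alpha(\varphi')$ is a composition of computable functions and hence continuous, and $[u,v]$ (under the stated convention) is compact, so continuity plus compactness gives attainment — but this is precisely the point where the strict-inequality biconditional $\mathit{st}(\exists^{[u,v]}x.\varphi') \leftrightarrow \max_{x\in[u,v]}\alpha(\varphi') > 0$ could fail if only a supremum were available, so it must be spelled out. Second, there is a bookkeeping subtlety about free variables: when the recursion descends under a quantifier, $\alpha(\varphi')$ is a term in the remaining free variables (including the newly bound $x$), the bounded $\min$/$\max$ constructor binds $x$ and its interval endpoints $u,v$ may themselves be nontrivial $\mathcal{L}_{\mathcal{F}}$-terms in the other variables — one needs the $m$-extension's closure clause to permit exactly this, and the earlier remark that in a bounded sentence the outermost bounds involve only constants anchors the base of this dependency. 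Everything else is a routine unwinding of the $\min$/$\max$ inequality identities, and I would present it compactly rather than case by case.
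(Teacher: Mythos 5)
Your proposal is correct and follows essentially the same route as the paper: the identical inductive translation (atoms to the term itself, $\wedge/\vee$ to binary $\min/\max$, bounded quantifiers to bounded $\max/\min$) followed by a straightforward structural induction on the four biconditionals. Your extra care about attainment of the extremum (continuity plus compactness) and about the free variables in the interval endpoints only makes explicit points the paper's proof leaves implicit.
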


\begin{proof}
We define $\alpha$ inductively as:
\begin{itemize}
\item For an atom $t>0$ or $t\geq 0$, $\alpha(\varphi) = t$. 
\item $\alpha(\varphi\wedge\psi) = \min(\alpha(\varphi), \alpha(\psi)).$
\item $\alpha(\varphi\vee\psi) = \max(\alpha(\varphi), \alpha(\psi)).$
\item $\alpha(\exists^{[u,v]}x.\varphi) = \max_{x\in [u,v]}(\alpha(\varphi)).$
\item $\alpha(\forall^{[u,v]}x.\varphi) = \min_{x\in [u,v]}(\alpha(\varphi)). $ 
\end{itemize} 
The properties are then easily verified. As an example we show that $\mathit{de}(\varphi)\leftrightarrow \alpha(\varphi)\geq 0$ holds. Note that $\mathit{de(\varphi)}$ only contains nonstrict inequalities.  
\begin{itemize}
\item For atomic formulas, $t\geq 0 \leftrightarrow \alpha(t)\geq 0$.
\item $\alpha(\varphi\wedge \psi)\geq 0$ is defined as $\min(\alpha(\varphi),\alpha(\psi))\geq 0$, which is equivalent to $\alpha(\varphi)\geq 0 \wedge \alpha(\psi)\geq 0$. By inductive hypothesis, this is equivalent to $\mathit{de}(\varphi)\wedge\mathit{de}(\psi)$, which is just $\mathit{de}(\varphi\wedge\psi)$. The binary $\max$ case is similar. 
\item $\alpha(\exists^{[u,v]} x.\varphi)\geq 0$ is defined as $\max_{x\in [u,v]}(\alpha(\varphi))\geq 0$, which is equivalent to $\exists^{[u,v]}x. \alpha(\varphi)\geq 0$. (If the max of $\alpha(\varphi)$ is bigger or equal than zero, then there exists $a\in [u,v]$ such that $\alpha(\varphi(a))\geq 0$; and vice versa.) By inductive hypothesis, $\alpha(\varphi)\geq 0$ is equivalent to $\exists^{[u,v]}x. \varphi$. The bounded $\min$ case is similar. 
\end{itemize}
\end{proof}

\begin{example}
Suppose  
$$\varphi:\ \forall^{[0,1]} x_1 \exists^{[0,x_1]} x_2. (e^{x_1}>0\wedge x_2\geq 0).$$ Then 
$$\alpha(\varphi) = \min_{x_1\in[0,1]}(\max_{x_2\in[0,x_1]}(\min(e^{x_1}, x_2))).$$ 
\end{example}

Now we are ready to establish the main theorem. The idea is that for any formula $\varphi$, the strictification of $\varphi$ is equivalent to the formula $\alpha(\varphi) > 0$. Whether this holds cannot, in general, be determined algorithmically, But given a small $\delta$, we {\em can} make a choice between the overlapping alternatives $\alpha(\varphi) > 0$ and $\alpha(\varphi) < \delta$, and this is enough to solve the relaxed decision problem.\\

\begin{proof}[Proof of Theorem~\ref{main}] Let $\varphi$ be an arbitrary $\mathcal{L}_{\mathcal{F}}$-sentence of the form
$$\varphi:\ Q_1^{[u_1,v_1]}x_1\cdots Q_n^{[u_n,v_n]}x_n.\ \psi[t_1>0;t_j\geq 0],$$
where $i$ ranges in from 1 to  $k$, and $j$ from $k+1$ to $m$. 

Following Lemma \ref{kk}, we can find an $\mathcal{L}_{\mathcal{F}_m}$-term $\alpha(\varphi)$, which satisfies: 
\begin{itemize}
\item $\mathit{st}(\varphi)$ is equivalent to $\alpha(\varphi)>0$, and
\item $(de(\varphi)^{+\delta})$ is equivalent to $\alpha(\varphi)\geq \delta$. 
\end{itemize}
Since $\varphi$ is a closed sentence with no free variables, $\alpha(\varphi)$ is a term whose variables are all bounded by the min and max operators. Thus, $\alpha(\varphi)$ is a computable constant. Let $M$ be the machine that computes $\alpha(\varphi)$. We have
$$\forall i\in \mathbb{N},\ |M(i)-\alpha(\varphi)|<2^{-i},$$
where $M(i)$ is a dyadic rational number, we write this number as $\lceil \alpha(\varphi)\rceil_i$. 

Since $\delta$ is a given positive rational number, it is easy to find a dyadic rational number that approximates $\delta$ to an arbitrary precision. This is needed for the technical reason that we want $\delta$ to have a finite binary representation. We now pick $\delta'$ to be a dyadic number satisfying
$$|\delta'-\delta|<\frac{\delta}{8}.$$

Next, let $k\in \mathbb{N}$ satisfy $2^{-k}<\delta'/4$. This number is then used to query the machine $M$ as the precision requirement. Namely, we have
$$|\lceil \alpha(\varphi)\rceil_k - \alpha(\varphi)|<2^{-k}<\frac{\delta'}{4}.$$

We now compare $\lceil \alpha(\varphi)\rceil_k$ with $\delta'/2$. Note that both numbers are dyadic rationals with finite length, and this inequality can be effectively tested. To emphasize, we label this test:
\begin{eqnarray}\label{varphi}
\lceil \alpha(\varphi)\rceil_k \geq \frac{\delta'}{2}.
\end{eqnarray}
The result of this test generates two cases, as follows. 
\begin{itemize}
\item Suppose (\ref{varphi}) is true. Then we know that 
\begin{align*}
\alpha(\varphi)&>\lceil\alpha(\varphi)\rceil_k - \frac{\delta'}{4}  >\frac{\delta'}{2} -\frac{\delta'}{4} = \frac{\delta'}{4}\\
& >\frac{1}{4}(\frac{7}{8}\delta) = \frac{7}{32}\delta. 
\end{align*}
Consequently, $\alpha(\varphi)>0$. Thus, in this case, we know $st(\varphi)$ is true. Following Proposition \ref{tr2}, we know $\varphi$ is true, and return $\mathsf{True}$.

\item Suppose (\ref{varphi}) is false. Then we know that 
\begin{align*}
\alpha(\varphi)&<\lceil\alpha(\varphi)\rceil_k + \frac{\delta'}{4}< \frac{\delta'}{2} +\frac{\delta'}{4} = \frac{3}{4}\delta'\\
&<\frac{3}{4}(\frac{9}{8}\delta) = \frac{27}{32}\delta.
\end{align*}
Consequently, $\alpha(\varphi)<\delta$. Thus, in this case, $\mathit{de}(\varphi^{+\delta})$ is false. Following Proposition \ref{tr2}, we know $\varphi^{+\delta}$ is false, and return $\delta$-$\mathsf{False}$.
\end{itemize}
In all, we have described an algorithm for deciding, given any bounded $\mathcal{L}_{\mathcal{F}}$-sentence $\varphi$ and $\delta\in \mathbb{Q}$, whether $\varphi$ is true, or the $\delta$-strengthening of $\varphi$ is false.
\end{proof}



\section{Complexity and Lower Bounds}\label{complexity}

In this section we consider the complexity of the $\delta$-decision problem for signatures of interest. In the proof of the main theorem, we have established a reduction from the $\delta$-decision problems of $\mathcal{L}_{\mathcal{F}}$ to computing the value of $\mathcal{L}_{\mathcal{F}_m}$-terms with alternations of min and max. The complexity of computing such terms can be exactly characterized by the min-max hierarchy over computable functions, as defined in~\cite{Kobook}. 

First, we need the definition of $\mathsf{\Sigma_{k,C[0,1]}}$-functions. 
\begin{definition}[\cite{Kobook}]
For $k\geq 0$, we say a real function $f:[0,1]\rightarrow \mathbb{R}$ is in $\mathsf{\Sigma_{k,C[0,1]}}$ (resp. $\mathsf{\Pi_{k,C[0,1]}}$) if there exists a representation $(m_f, \theta_f)$ of $f$, such that
\begin{enumerate}
\item The modulus function $m_f: \mathbb{N}\rightarrow \mathbb{N}$ is a polynomial, and
\item for all $d\in \mathbb{D}\cap [0,1]$ and all $i\in \mathbb{N}$, $|\theta_f(d, n)-f(d)|\leq 2^{-i}$, and the set $A_{\theta_f} = \{\langle d, e, 0^i\rangle: e\leq \theta_f(d, i)\}$ is in $\mathsf{\Sigma_k}$ (resp. $\mathsf{\Pi_k}$). ($0^i$ denotes the string of $i$ zeros.) 
\end{enumerate}
\end{definition}
\begin{remark}
Note that using membership queries to $A_{\psi}$, we can easily (in polynomial-time) determine the value of $\psi(d,i)$. Thus by replacing the third condition with $\mathsf{P}$ or $\mathsf{PSPACE}$, we obtain the definition of $\mathsf{P_{C[0,1]}}$ and $\mathsf{PSPACE_{C[0,1]}}$. It is also clear that $\mathsf{\Sigma_{0,C[0,1]}}=\mathsf{\Pi_{0,C[0,1]}}=\mathsf{P_{C[0,1]}}$.  
\end{remark}
The key result as shown by Ko~\cite{Kobook} is that, if $f(x,y)$ is in $\mathsf{P_{C[0,1]}}$, then $\max_{x\in [0,1]} f(x,y)$ is in $\mathsf{NP_{C[0,1]}}$. In general, Ko proved that: 
\begin{proposition}[\cite{Kobook}]
Let $f: [0,1]^n\rightarrow \mathbb{R}$ be a real function in $\mathsf{P_{C[0,1]}}$. Define $g: [0,1]^{m_0}\rightarrow \mathbb{R}$ as 
$$g(\vec x_0) = \max_{\vec x_1\in [0,1]^{m_1}}\min_{\vec x_2\in [0,1]^{m_2}}\cdots \underset{\vec x_k\in [0,1]^{m_k}}{\mbox{opt}} f(\vec x_0, \vec x_1, ..., \vec x_k)$$
where $opt$ is $\min$ if $k$ is even and $\max$ if $k$ is odd, and $\sum_{i=0}^k m_i = n$. We then have $g\in \mathsf{\Sigma_{k,C[0,1]}}$. 
\end{proposition}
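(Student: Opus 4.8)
\textit{Proof proposal.} The plan is to exhibit an explicit representation $(m_g,\theta_g)$ of $g$ satisfying the three clauses in the definition of $\mathsf{\Sigma_{k,C[0,1]}}$: a polynomial modulus $m_g$, a dyadic-valued $\theta_g$ with $|\theta_g(\vec d_0,i)-g(\vec d_0)|\le 2^{-i}$, and the associated set $A_{\theta_g}=\{\langle \vec d_0,e,0^i\rangle: e\le\theta_g(\vec d_0,i)\}$ lying in $\mathsf{\Sigma_k}$. Fix once and for all a representation $(m_f,\theta_f)$ of $f$ with $m_f$ polynomial and, since $f\in\mathsf{P_{C[0,1]}}=\mathsf{\Sigma_{0,C[0,1]}}$, with $A_{\theta_f}\in\mathsf{\Sigma_0}=\mathsf{P}$; the case $k=0$ is trivial ($g=f$), so assume $k\ge 1$. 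For the modulus I would use the elementary bounds $|\max_{\vec x}h_1(\vec x)-\max_{\vec x}h_2(\vec x)|\le\sup_{\vec x}|h_1(\vec x)-h_2(\vec x)|$ and its analogue for $\min$; applied through the $k$ optimization layers they give $|g(\vec x_0)-g(\vec y_0)|\le\sup_{\vec x_1,\dots,\vec x_k}|f(\vec x_0,\vec x_1,\dots,\vec x_k)-f(\vec y_0,\vec x_1,\dots,\vec x_k)|$, so $m_g=m_f$ already works and is polynomial.

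The heart of the argument is a discretization step. Put $N=m_f(i+2+\lceil\log_2 k\rceil)$ and let $G_j\subseteq[0,1]^{m_j}$ be the grid of points with coordinates in $2^{-N}\mathbb{Z}\cap[0,1]$, so $G_j$ is a $2^{-N}$-net of the cube and each of its points has a binary description of length $O(Nm_j)$, polynomial in $i$. Define
$$\theta_g(\vec d_0,i)=\max_{\vec g_1\in G_1}\min_{\vec g_2\in G_2}\cdots\underset{\vec g_k\in G_k}{\mathrm{opt}}\ \theta_f(\vec d_0,\vec g_1,\dots,\vec g_k,\,i+2),$$
a well-defined dyadic number. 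I would then show that replacing each continuous optimization over $[0,1]^{m_j}$ by the optimization over $G_j$ changes the iterated value by at most the modulus bound at scale $2^{-N}$, namely $2^{-(i+2+\lceil\log_2 k\rceil)}$ — because $\max$ and $\min$ preserve moduli of continuity, so every intermediate function obtained by carrying out the innermost optimizations still has modulus $m_f$, and $\max/\min$ over a common domain of two uniformly close functions are uniformly close. Summing the $k$ layers contributes at most $2^{-(i+2)}$, and replacing $f$ by $\theta_f(\cdot,i+2)$ adds another $2^{-(i+2)}$, so $|\theta_g(\vec d_0,i)-g(\vec d_0)|\le 2^{-(i+1)}\le 2^{-i}$, as required.

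For the complexity of $A_{\theta_g}$, the condition $e\le\theta_g(\vec d_0,i)$ unfolds, via $e\le\max_{\vec g}h\iff\exists\vec g\,(e\le h)$ and $e\le\min_{\vec g}h\iff\forall\vec g\,(e\le h)$, into a prefix of $k$ alternating quantifier blocks starting with $\exists$ (from the outermost $\max$) whose innermost block is $\exists$ if $k$ is odd and $\forall$ if $k$ is even — exactly matching the prescribed pattern of $\mathrm{opt}$ — where each $\vec g_j$ ranges over strings of polynomial length (with a polynomial-time side check that the string codes a point of $G_j$), followed by the matrix $e\le\theta_f(\vec d_0,\vec g_1,\dots,\vec g_k,i+2)$, which is just membership of $\langle(\vec d_0,\vec g_1,\dots,\vec g_k),e,0^{\,i+2}\rangle$ in $A_{\theta_f}\in\mathsf{P}$. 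This is precisely the defining form of a $\mathsf{\Sigma_k}$ set (checking for $k=1,2,3$ that it yields $\mathsf{NP}$, $\mathsf{\Sigma_2^P}$, $\mathsf{\Sigma_3^P}$ confirms the bookkeeping), so $A_{\theta_g}\in\mathsf{\Sigma_k}$ and therefore $g\in\mathsf{\Sigma_{k,C[0,1]}}$.

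The step I expect to be the main obstacle is the discretization bound: making rigorous that computing $g$ to within $2^{-i}$ never requires evaluating $f$ outside a fixed, polynomially describable dyadic grid, and that the approximation error stays controlled as it propagates through all $k$ nested min/max layers (this is where the $\lceil\log_2 k\rceil$ in $N$ is needed). Once that is in place, the key conceptual point is merely the observation that ``guessing which grid point is optimal'' is exactly the kind of polynomially bounded quantifier the $\mathsf{\Sigma_k}$ definition makes available; the rest is routine manipulation of moduli of continuity and closure of $\mathsf{P}$ under polynomial-time operations.
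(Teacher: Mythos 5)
Your proposal is correct, and it reconstructs essentially the standard argument: the paper itself does not prove this proposition (it is imported from Ko's book), and your route — use the polynomial modulus to replace each continuous $\max/\min$ by an optimization over a polynomially describable dyadic grid, then unfold $e\le\theta_g(\vec d_0,i)$ into $k$ alternating polynomially bounded quantifier blocks over grid points whose matrix is a membership query to $A_{\theta_f}\in\mathsf{P}$ — is exactly the proof Ko gives. The only loose ends are routine bookkeeping (a possible $\sqrt{n}$ factor if the modulus is stated for the Euclidean norm, strict versus nonstrict inequalities in the modulus, and the $k\,\epsilon_N$ error summation you already carry out), all absorbed by a trivial adjustment of your parameter $N$.
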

Following the definition of $\mathsf{\Sigma_{k,C[0,1]}}$-classes, it is straightforward to obtain the decision version of this result, and also to relativize to complexity classes other than $\mathsf{P_{C[0,1]}}$. 
\begin{lemma}\label{lem}
Suppose $f: [0,1]^n\rightarrow\mathbb{R}$ is in complexity class $\mathsf{C}$ with a polynomial modulus function. Define $g: [0,1]^{m_0}\rightarrow \mathbb{R}$ as 
$$g(\vec x_0) = \max_{\vec x_1\in [0,1]^{m_1}}\min_{\vec x_2\in [0,1]^{m_2}}\cdots\underset{\vec x_k\in [0,1]^{m_k}}{\mbox{opt}}f(\vec x_0, \vec x_1, ..., \vec x_k)$$
where $opt$ is $\min$ if $k$ is even and $\min$ if $k$ is odd, and $\sum_{i=0}^k m_i = n$. Then there exists a representation of $g$, $(m_g, \theta_g)$, such that the following problem is in $\mathsf{(\Sigma_k^P)^C}$: given any $d,e\in \mathbb{D}$ and $i\in \mathbb{N}$, decide if $\theta_g(d, i)\geq e$. 
\end{lemma}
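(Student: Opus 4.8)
The plan is to relativize Ko's proof of the preceding Proposition: evaluate the nested $\max$--$\min$ on a sufficiently fine \emph{dyadic grid}, use the $\mathsf{C}$-decidability of the innermost inequality $\theta_f(d,i)\ge e$ in place of the polynomial-time computation available in the $\mathsf{P_{C[0,1]}}$ case, and read the alternation of $\max$ and $\min$ off directly as a $\Sigma_k^P$ quantifier prefix over a $\mathsf{C}$-oracle.

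First I would fix a representation $(m_g,\theta_g)$ of $g$. Since taking $\sup$ or $\inf$ of a family of functions that are pointwise within $2^{-i}$ of one another leaves the results within $2^{-i}$, the polynomial modulus $m_f$ (shifted by one, to be safe) is a uniform modulus of continuity for $g$ in $\vec x_0$, so set $m_g(i)=m_f(i+1)$. For $\theta_g$, let $\ell=i+\lceil\log_2 k\rceil+1$, pick the least $p=p(i)$ with $\sqrt n\,2^{-p}<2^{-m_f(\ell)}$, put $G_p=\{\,j2^{-p}:0\le j\le 2^p\,\}\subseteq\mathbb D\cap[0,1]$, and define
$$\theta_g(d,i)=\max_{\vec x_1\in G_p^{m_1}}\min_{\vec x_2\in G_p^{m_2}}\cdots\underset{\vec x_k\in G_p^{m_k}}{\mathrm{opt}}\ \theta_f\big(\langle d,\vec x_1,\dots,\vec x_k\rangle,\,i+1\big),$$
with the same $\max$/$\min$ alternation that defines $g$. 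Because $m_f$ is a polynomial and $k,n$ are fixed constants, $\ell=i+O(1)$ and hence $p(i)$ is polynomial in $i$.

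The estimate $|\theta_g(d,i)-g(d)|\le 2^{-i}$ is the main computation, and I would split it in two. (i) Replacing each call $\theta_f(\langle d,\vec x\rangle,i+1)$ by the exact value $f(d,\vec x)$ perturbs the whole expression by at most $2^{-(i+1)}$, since at every one of the $k$ levels the $\max$ or $\min$ of two uniformly $2^{-(i+1)}$-close functions stays $2^{-(i+1)}$-close. (ii) Replacing each finite grid $G_p^{m_j}$ by the full cube $[0,1]^{m_j}$ perturbs the value by at most $2^{-\ell}$ \emph{per level}: peeling quantifiers from the inside out, every intermediate function $\mathrm{opt}_{\vec x_{j+1}\in[0,1]}\cdots f$ still obeys the modulus $m_f$, each point of $[0,1]^{m_j}$ lies within Euclidean distance $\sqrt n\,2^{-p}<2^{-m_f(\ell)}$ of a grid point, so the grid $\max$/$\min$ differs from the continuum one by at most $2^{-\ell}$, and this bound again survives the outer $\max$/$\min$. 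Summing the $k$ levels gives discretization error at most $k\,2^{-\ell}\le 2^{-(i+1)}$, and combining (i) with (ii) yields $2^{-i}$. Thus $(m_g,\theta_g)$ is a representation of $g$.

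For the complexity of the query problem: because $G_p$ is finite, $\max$ (resp. $\min$) over $G_p^{m_j}$ is attained, so a threshold test distributes through the operators, giving
$$\theta_g(d,i)\ge e\ \Longleftrightarrow\ \exists\vec x_1\in G_p^{m_1}\ \forall\vec x_2\in G_p^{m_2}\ \cdots\ \mathrm{Q}_k\vec x_k\in G_p^{m_k}.\ \ \theta_f\big(\langle d,\vec x_1,\dots,\vec x_k\rangle,\,i+1\big)\ge e,$$
where $\mathrm{Q}_k$ is $\exists$ for odd $k$ and $\forall$ for even $k$ — a $\Sigma_k$ alternation beginning with $\exists$. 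Each block $\vec x_j$ consists of $m_j$ elements of $G_p$, hence is encoded by $O\big(m_j(p(i)+1)\big)=\mathrm{poly}(i)$ bits, so the full prefix has polynomial length; the quantifier-free kernel $\theta_f(\langle d,\vec x\rangle,i+1)\ge e$ is exactly a membership query to the set $A_{\theta_f}\in\mathsf{C}$. Hence ``$\theta_g(d,i)\ge e$'' is a polynomially-bounded $k$-fold alternation of quantifiers (starting with $\exists$) over a $\mathsf{C}$-decidable matrix, i.e.\ it lies in $\mathsf{(\Sigma_k^P)^C}$. The one genuinely delicate point is part (ii) of the error analysis — verifying that a polynomially fine grid suffices and that the $k$ successive discretizations accumulate only additively; the passage from $\mathsf{P_{C[0,1]}}$ to an arbitrary $\mathsf{C}$ is cost-free, since $\mathsf{C}$ enters solely as the oracle deciding the innermost inequality.
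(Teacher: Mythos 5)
Your proposal is correct and follows exactly the route the paper intends: the paper gives no explicit proof, simply observing that Ko's max--min proposition relativizes, and your argument is the standard Ko discretization proof carried out with the $\mathsf{C}$-oracle---a polynomially fine dyadic grid determined by the polynomial modulus, the two-part error estimate (approximation of $f$ by $\theta_f$ plus the $k\cdot 2^{-\ell}$ grid error), and the reading of the threshold test $\theta_g(d,i)\geq e$ as a $\Sigma_k$ alternation of polynomial-length quantifiers whose matrix is a single membership query to $A_{\theta_f}\in\mathsf{C}$. This fills in precisely the details the paper leaves to the reader, so no further comparison is needed.
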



\begin{definition}
Let $\varphi$ be of the form 
$$Q_1^{[u_1,v_1]}x_1 \cdots Q_n^{[u_n,v_n]}\psi(x_1,...,x_n).$$ 
We define $\varphi_{[0,1]}$ to be 
$$\varphi_{[0,1]} = Q_1^{[0,1]} x_1 \cdots Q_n^{[0,1]}x_n\psi\big[x_i\big/(u_i+(v_i-u_i)x_i)\big].$$
\end{definition} 

It is clear that $\varphi$ and $\varphi_{[0,1]}$ are equivalent and the transformation can be done in polynomial-time. Now we are ready to state the complexity results for the $\delta$-decision problems. 

\begin{theorem}\label{compmain}
Let $\mathcal{F}$ be a class of computable functions. Let $S$ be a class of $\mathcal{L}_{\mathcal{F}}$-sentences, such that for any $\varphi$ in $S$, the terms in $\varphi_{[0,1]}$ are computable in complexity class $\mathsf{C}$ where $\mathsf{P_{C[0,1]}\subseteq \mathsf{C}\subseteq \mathsf{PSPACE_{C[0,1]}}}$. Then, for any $\delta\in \mathbb{Q}^+$, the $\delta$-decision problem for bounded $\Sigma_n$-sentences in $S$ is in $\mathsf{(\Sigma_n^P)^C}$.
\end{theorem}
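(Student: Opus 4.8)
The plan is to run the algorithm from the proof of Theorem~\ref{main} on the normalized sentence $\varphi_{[0,1]}$ and bound its cost via Lemma~\ref{lem}. First I would replace $\varphi$ by $\varphi_{[0,1]}$: this transformation is polynomial-time and truth-preserving, it commutes with $\delta$-strengthening (strengthening only raises the right-hand sides of the atoms, while the normalization substitutes inside the terms and rewrites quantifier bounds), so the $\delta$-decision answers for $\varphi$ and $\varphi_{[0,1]}$ agree, and afterwards all quantifiers range over the literal interval $[0,1]$ and all terms occurring in $\varphi_{[0,1]}$ are computable in $\mathsf{C}$ (hence, by the definitions of these classes, with a polynomial modulus). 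Recall the algorithm inside the proof of Theorem~\ref{main}: for the fixed $\delta$ one picks a dyadic $\delta'$ with $|\delta'-\delta|<\delta/8$ and a $k\in\mathbb{N}$ with $2^{-k}<\delta'/4$ (both depending only on $\delta$, hence constants), computes any $2^{-k}$-approximation of the computable constant $\alpha(\varphi_{[0,1]})$, and answers $\mathsf{True}$ if that approximation is $\geq\delta'/2$ and $\delta$-$\mathsf{False}$ otherwise. So it suffices to produce such an approximation, together with the comparison, inside $\mathsf{(\Sigma_n^P)^C}$, uniformly in the input sentence.

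Next I would unwind the inductive definition of $\alpha$ from Lemma~\ref{kk}. Since $\varphi_{[0,1]}$ is a bounded $\Sigma_n$-sentence, grouping adjacent same-type quantifiers into blocks yields
$$\alpha(\varphi_{[0,1]})=\max_{\vec y_1\in[0,1]^{m_1}}\min_{\vec y_2\in[0,1]^{m_2}}\cdots\underset{\vec y_n\in[0,1]^{m_n}}{\mathrm{opt}}\ \beta(\vec y_1,\dots,\vec y_n),$$
where $\vec y_j$ collects the variables of the $j$-th block, the outermost operator is $\max$ because the prefix begins with $\exists$, the operators alternate, $\mathrm{opt}$ is $\min$ for $n$ even and $\max$ for $n$ odd, and $\beta$ is the quantifier-free term obtained from the matrix by the clauses $\wedge\mapsto\min$, $\vee\mapsto\max$ of $\alpha$. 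The key point is that $\beta$ contributes no quantifier alternation: each atom of $\varphi_{[0,1]}$ is in $\mathsf{C}$ with a polynomial modulus, and binary $\min$ and $\max$ preserve membership in $\mathsf{C}$ while keeping the modulus polynomial (the combined modulus is the pointwise maximum of the two, and the graph of the approximation function is closed under the Boolean operation corresponding to $\min$/$\max$ — intersection for $\min$, union for $\max$ — which is standard for $\mathsf{P_{C[0,1]}}$, $\mathsf{PSPACE_{C[0,1]}}$, and the classes of the min–max hierarchy). Hence $\beta\in\mathsf{C}$ with a polynomial modulus, uniformly in $\varphi$, and $\alpha(\varphi_{[0,1]})$ is exactly an instance of the expression in Lemma~\ref{lem} with $f=\beta$, $k=n$, and $m_0=0$ (so $g$ is the constant we want).

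Applying Lemma~\ref{lem} then furnishes a representation $(m_g,\theta_g)$ of $g=\alpha(\varphi_{[0,1]})$ for which the problem ``given $d,e\in\mathbb{D}$ and $i\in\mathbb{N}$, decide whether $\theta_g(d,i)\geq e$'' lies in $\mathsf{(\Sigma_n^P)^C}$. Since $g$ is a constant, $\theta_g$ effectively depends only on its precision argument, and $\theta_g(k)$ is by definition a $2^{-k}$-approximation of $\alpha(\varphi_{[0,1]})$, so it can play the role of $\lceil\alpha(\varphi_{[0,1]})\rceil_k$ in the Theorem~\ref{main} algorithm. The whole procedure is therefore: on input $\varphi$, compute the constants $\delta'$ and $k$, form $\varphi_{[0,1]}$ in polynomial time, and issue the single query ``$\theta_g(k)\geq\delta'/2$?'', which is one instance of the $\mathsf{(\Sigma_n^P)^C}$ problem of Lemma~\ref{lem} with $i=k$ and $e=\delta'/2$; answer $\mathsf{True}$ or $\delta$-$\mathsf{False}$ according to the outcome. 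All non-query work is polynomial time in $|\varphi|$, and $\mathsf{(\Sigma_n^P)^C}$ is closed under polynomial-time many-one reducibility, so the $\delta$-decision problem for bounded $\Sigma_n$-sentences of $S$ is in $\mathsf{(\Sigma_n^P)^C}$.

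The step I expect to require the most care is the treatment of the matrix: one must verify that collapsing the positive Boolean combination of atoms into the single base function $\beta$ genuinely remains in $\mathsf{C}$ with a polynomial modulus, and that this is done \emph{uniformly} in $\varphi$, so that $|\varphi|$ — not some fixed quantity — is the size parameter the polynomial-hierarchy machine sees; and one must be careful that it is the number of alternating quantifier blocks, not the total number of quantified variables, that pins down the level $n$. The alternation bookkeeping (that a $\Sigma_n$ prefix produces precisely the $\max\min\cdots\mathrm{opt}$ pattern of Lemma~\ref{lem} once same-type quantifiers are grouped into the $\vec y_j$) is routine but should be spelled out, and one should record that the convention ``$[u,v]$ means $[v,u]$ when $v<u$'' makes the substitution $x_i\mapsto u_i+(v_i-u_i)x_i$ defining $\varphi_{[0,1]}$ harmless.
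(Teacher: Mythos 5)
Your proposal is correct and follows essentially the same route as the paper: normalize to $\varphi_{[0,1]}$, show that $\alpha$ of the matrix stays $\mathsf{C}$-computable with a polynomial modulus under binary $\min$/$\max$, apply Lemma~\ref{lem} to the resulting $\max$--$\min$ alternation (with the constant $g$ as a $0$-ary function), and reduce the test $\lceil\alpha(\varphi_{[0,1]})\rceil_k\geq\delta'/2$ from the Theorem~\ref{main} algorithm to one $\mathsf{(\Sigma_n^P)^C}$ query. The extra bookkeeping you flag (uniformity in $|\varphi|$, commuting of strengthening with normalization, block-grouping of quantifiers) is consistent with, and slightly more explicit than, the paper's own argument.
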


\begin{proof} Consider any $\Sigma_k$-sentence $\varphi\in S$. Write $\varphi_{[0,1]}$ as
$$\exists^{[0,1]^{m_1}} \vec x_1\forall^{[0,1]^{m_2}}\vec x_2\cdots Q_k^{[0,1]^{m_k}}\vec x_k\ \psi(\vec x_1,...,\vec x_k),$$
where $Q_k$ is $\exists$ if $k$ is odd and $\forall$ otherwise. 

Note that since $\mathsf{P_{C[0,1]}}\subseteq \mathsf{C}\subseteq \mathsf{PSPACE_{C[0,1]}}$, $\mathsf{C}$ is closed under polynomial-time reduction, and every function in $\mathsf{C}$ has a polynomial modulus function over $[0,1]$. 

Following the algorithm in the proof of Theorem~\ref{main}, we compute the $\mathcal{L}_{\mathcal{F}_m}$-term $\alpha(\varphi_{[0,1]})$, which is of the form
$$\alpha(\varphi_{[0,1]}):\ \max_{\vec x_1\in [0,1]^{m_1}}\min_{\vec x_2\in [0,1]^{m_2}}\cdots\underset{\vec x_k\in [0,1]^{m_k}}{\mbox{opt}}\alpha(\psi)$$
where $\mathit{opt}$ is $\max$ if $k$ is odd and $\min$ otherwise. This step uses linear time and $\alpha(\varphi_{[0,1]})$ is linear in the size of $\varphi$.

Following the assumptions on $S$, all terms in $\psi$ are computable in $\mathsf{C}$. It follows that $\alpha(\psi)$ is computable in $\mathsf{C}$, which can be shown inductively as follows. For atomic formulas, $\alpha(\psi)$ is a term computable in $\mathsf{C}$. If $\psi = \phi_1\wedge \phi_2$ (resp. $\phi_1\vee \phi_2$) then by definition $\alpha(\psi) = \min(\alpha(\phi_1), \alpha(\phi_2))$ (resp. $\max(\alpha(\phi_1), \alpha(\phi_2))$), where $\alpha(\phi_1)$ and $\alpha(\phi_2)$ are $\mathsf{C}$-computable by inductive hypothesis. Since the binary $\min(\cdot,\cdot)$ and $\max(\cdot, \cdot)$ are both computable in polynomial-time and $\mathsf{C}$ is closed under polynomial-time reduction, we have that $\alpha(\psi)$ is $\mathsf{C}$-computable. 

Let $\alpha(\varphi_{[0,1]})$ be represented by $(m_{\alpha(\varphi)},\theta_{\alpha(\varphi)})$. Now, since $\alpha(\psi)$ is $\mathsf{C}$-computable (and has a polynomial modulus function), following Lemma~\ref{lem}, we know that given any $e\in \mathbb{D}$ and $i\in \mathbb{N}$, deciding $\theta_{\alpha(\varphi)}(i)\geq e$ is in $\mathsf{{\Sigma_k^P}^C}$. (Note that $\alpha(\varphi_{[0,1]})$ is a 0-ary function). In the proof of Theorem~\ref{main}, we checked the condition $\alpha(\varphi)(k)\geq \delta'/2$ in (1). Here, both $\delta'$ and $k$ are computed in linear time. Thus, the condition can be checked in $\mathsf{(\Sigma_k^P)^C}$. 


In all, we described a polynomial-time reduction from the $\delta$-decision problem of a $\Sigma_k$-sentence $\varphi$ in $\mathcal{L}_{\mathcal{F}_m}$ to a $\mathsf{(\Sigma_k^P)^C}$ problem. Thus, the $\delta$-decision problem resides in $\mathsf{(\Sigma_n^P)^C}$. 
\end{proof}

\begin{remark}
We used the assumption that all the terms uniformly reside in some complexity class $\mathsf{C}$. It is not enough to assume only that the signature $\mathcal{F}$ is in $\mathsf{C}$, since the formulas can contain an arbitrary number of function composition. The complexity of evaluating composition of functions can easily be exponential in the number of iterative composition operations (with linear functions). This would trivialize the problem. Under the current assumption, each $\mathcal{L}_{\mathcal{F}}$-term that occur in $S$ is encoded as a function in $\mathsf{C}$ and such composition is not allowed. Thus the complexity is measured in terms of the length of the Boolean combinations of the $\mathcal{L}_{\mathcal{F}}$-terms.
\end{remark}

As corollaries, we now prove completeness results for signatures of interest. 

\begin{corollary}
Let $\mathcal{F}$ be a set of $\mathsf{P}$-computable functions (which, for instance, includes $\exp$ and $\sin$). The $\delta$-decision problem bounded $\Sigma_n$-sentences in $\mathcal{L}_{\mathcal{F}}$ is $\mathsf{\Sigma_n^P}$-complete. 
\end{corollary}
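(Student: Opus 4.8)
The plan is to derive this corollary directly from Theorem~\ref{compmain} together with the standard undecidability/completeness machinery, so the argument splits into an upper bound and a matching lower bound. For the upper bound, I would first observe that since every $f\in\mathcal F$ is $\mathsf P$-computable, each $f$ has a polynomial modulus function over $[0,1]$, so we may instantiate Theorem~\ref{compmain} with $\mathsf C=\mathsf P_{\mathsf C[0,1]}$. The one point that needs care is the hypothesis of Theorem~\ref{compmain}: it requires that the \emph{terms} occurring in $\varphi_{[0,1]}$ — not merely the function symbols — be computable in $\mathsf C$. Here I would invoke the remark following Theorem~\ref{compmain}: because $\mathsf P$-computable functions are closed under composition (composition of polynomial-time oracle machines with polynomial moduli is again polynomial-time with a polynomial modulus), every $\mathcal L_{\mathcal F}$-term built from finitely many symbols in $\mathcal F$ is itself $\mathsf P$-computable over $[0,1]$ after the affine rescaling defining $\varphi_{[0,1]}$. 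Hence Theorem~\ref{compmain} applies and the $\delta$-decision problem for bounded $\Sigma_n$-sentences lies in $(\Sigma_n^P)^{\mathsf P}=\Sigma_n^P$, using the standard collapse $\mathsf{(\Sigma_n^P)^P}=\Sigma_n^P$.

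For the lower bound, I would exhibit a polynomial-time reduction from QBF restricted to $n$ quantifier alternations (the canonical $\Sigma_n^P$-complete problem) to the $\delta$-decision problem. Given a quantified Boolean formula $\vec Q\vec b\,.\,\chi(\vec b)$ with $n$ alternating blocks starting with $\exists$, replace each Boolean variable $b$ by a real variable $x\in[0,1]$, each literal $b$ by the term $x$ and $\lnot b$ by $1-x$, each conjunction by a bounded-$\min$ over the two subterms' encodings and each disjunction by a bounded-$\max$, and force integrality by conjoining a constraint such as $x(1-x)\le 0$, i.e. $-x(1-x)\ge 0$, inside the matrix under each quantifier; this only uses $+$, $-$, $|\cdot|$ and rational constants, all of which are $\mathsf P$-computable and present in $\mathcal F$. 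Choosing, say, $\delta=\tfrac12$, the $\delta$-strengthening flips the truth value only in a ``grey zone'' that the integrality constraints rule out, so for these specially-constructed sentences the $\delta$-decision answer coincides exactly with the truth value of the QBF. Since the reduction is clearly polynomial-time, this gives $\Sigma_n^P$-hardness, and combined with the upper bound we conclude $\Sigma_n^P$-completeness.

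The step I expect to require the most attention is making the hardness reduction genuinely ``$\delta$-robust,'' i.e. ensuring that on the image of the reduction the two overlapping alternatives of the $\delta$-decision problem never both apply, so that the algorithm's answer is forced and the reduction is correct rather than merely promise-correct. Concretely this means verifying that $\alpha(\varphi)$ evaluated on the encoding is bounded away from $0$ by more than $\delta$ on ``true'' instances and is negative (or below $\delta$) on ``false'' instances — which follows because after eliminating the integrality-violating points the inner Boolean combination of $\min$/$\max$ of $\{0,1\}$-valued quantities takes values in a discrete set, and the bounded quantifiers preserve this. Once that is checked, everything else is bookkeeping: the affine rescaling to $[0,1]$, the closure of $\mathsf P_{\mathsf C[0,1]}$ under composition, and the oracle-collapse $\mathsf{(\Sigma_n^P)}^{\mathsf P}=\Sigma_n^P$ are all standard.
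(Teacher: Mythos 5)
Your upper bound is essentially the paper's: instantiate Theorem~\ref{compmain} with $\mathsf{C}=\mathsf{P_{C[0,1]}}$ and use $\mathsf{(\Sigma_n^P)^P}=\mathsf{\Sigma_n^P}$. But your hardness reduction has a genuine flaw. The integrality gadget $-x(1-x)\geq 0$ takes the value exactly $0$ at the only points ($x\in\{0,1\}$) used to witness truth, so it has zero slack: its $\delta$-strengthening $-x(1-x)\geq \delta$ is unsatisfiable on $[0,1]$ for any $\delta>0$. Consequently, for a \emph{true} QBF the constructed sentence $\varphi$ is true while $\varphi^{+\delta}$ is already false, i.e.\ the instance lies squarely in the grey zone, and a correct $\delta$-decision algorithm may legitimately answer ``$\delta$-$\mathsf{False}$''; the reduction therefore does not force the answer and is unsound. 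Your own robustness check (``$\alpha(\varphi)$ bounded away from $0$ by more than $\delta$ on true instances'') fails on precisely these atoms. A second problem: conjoining the integrality constraint inside the matrix under a \emph{universally} quantified variable makes the sentence false outright (take $x=1/2$), so for $n\geq 2$ the encoding is not even equivalent to the QBF at $\delta=0$; Boolean-ness of universal variables must appear as the antecedent of an implication, written with positive atoms. The paper's gadget is designed to avoid the slack problem: variables range over $[-2,2]$, $p_i$ becomes $x_i>0$, $\lnot p_i$ becomes $-x_i>1$, and the clause $(x_i>0\vee -x_i>1)$ is added; witnesses $x_i=\pm 2$ satisfy every needed atom with margin at least $1$, so the sentence is robust for $\delta<1/2$ and the $\delta$-decision answer coincides with the QBF value. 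Note also that your gadget uses multiplication, which is not among the guaranteed symbols ($0$, $-$, $+$, $|\cdot|$, rational constants), contradicting your own claim that only those are used.

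On the upper bound, your appeal to the remark after Theorem~\ref{compmain} inverts what it says: the remark warns that having each symbol of $\mathcal{F}$ in $\mathsf{C}$ is \emph{not} sufficient, because evaluating nested compositions can cost exponentially in the nesting depth (moduli compose, so the polynomial degree grows with the term); closure under composition makes each \emph{fixed} term $\mathsf{P}$-computable but not uniformly over all terms of the language. The paper's convention is that the terms of the input sentences are themselves uniformly in $\mathsf{C}$, with complexity measured in the size of the Boolean and quantifier structure; under that reading your membership argument, and the collapse $\mathsf{(\Sigma_n^P)^P}=\mathsf{\Sigma_n^P}$, go through exactly as in the paper.
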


\begin{proof} Following the above theorem deciding a bounded $\Sigma_n$-sentence is in $\mathsf{(\Sigma_n^P)}^{\mathsf{P}}$, which is just $\mathsf{\Sigma_n^P}$.

Hardness can be shown by encoding quantified Boolean satisfiability. We need to be careful that positive atoms are used to express negations. Let $\theta$ be a Boolean formula in CNF, whose propositional variables are $p_1,...,p_m$. Substitute $p_i$ by $x>0$ and $\neg p_i$ by $-x_i>1$, and add the clause $(x_i>0\vee -x_i>1)$ to the original formula as a conjunction. Then substitute $Q p_i$ by $Q^{[-2,2]}x_i$ where $Q$ is either $\exists$ or $\forall$. It is easy to see that new the formula is robust for any $\delta<1/2$, and equivalent with the original Boolean formula.
\end{proof}

\begin{corollary}
Suppose $\mathcal{F}$ consists of Lipschitz-continuous ODEs over compact domains. The $\delta$-decision problem for bounded $\mathcal{L}_{\mathcal{F}}$-sentences is $\mathsf{PSPACE}$-complete. 
\end{corollary}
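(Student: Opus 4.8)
The plan is to read the upper bound off Theorem~\ref{compmain} and to get the matching lower bound from the $\mathsf{PSPACE}$-completeness of Lipschitz initial-value problems.

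For membership in $\mathsf{PSPACE}$, I would take $\mathsf{C}=\mathsf{PSPACE_{C[0,1]}}$, which trivially satisfies $\mathsf{P_{C[0,1]}\subseteq C\subseteq PSPACE_{C[0,1]}}$. By Kawamura's Proposition the solution operator of a Lipschitz-continuous ODE over a compact domain lies in $\mathsf{PSPACE_{C[0,1]}}$; note that a Lipschitz solution is $C^1$ on the compact domain, hence has a linear (so polynomial) modulus of continuity, which is exactly the regularity the representation-based definition of that class demands. Every term occurring in $\varphi_{[0,1]}$ is then a finite composition of such functions with the mandatory $0,-,+,|\cdot|$, rational constants, and the affine rescalings, and hence is again $\mathsf{C}$-computable: one uses closure of $\mathsf{PSPACE_{C[0,1]}}$ under composition, and for Lipschitz functions with a fixed Lipschitz constant iterated composition increases the modulus only additively, so the blow-up warned about in the Remark following Theorem~\ref{compmain} does not occur. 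Theorem~\ref{compmain} therefore places the $\delta$-decision problem for bounded $\Sigma_n$-sentences in $\mathsf{(\Sigma_n^P)^{C}}$. Since the oracle invoked by such a computation is, by the very definition of $\mathsf{PSPACE_{C[0,1]}}$, an ordinary $\mathsf{PSPACE}$ set, and since $\mathsf{(\Sigma_n^P)^{PSPACE}}=\mathsf{PSPACE}$ for every $n$ bounded by the input length, putting an arbitrary bounded sentence in prenex form (so its alternation depth is at most its length) gives that the whole $\delta$-decision problem is in $\mathsf{PSPACE}$.

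For $\mathsf{PSPACE}$-hardness I would reduce from a $\mathsf{PSPACE}$-complete language $A$ via the second half of Kawamura's Proposition: there is a Lipschitz initial-value problem whose solution $f\in\mathcal{F}$ is $\mathsf{PSPACE}$-hard in Ko's sense. Unwinding the hardness definition, and composing with a fixed polynomial-time normalization if needed so that membership is recorded directly by the value of $f$ being above or below a threshold, membership of $w$ in $A$ is witnessed by $f(d_w)$ at a dyadic point $d_w=g(w)$ computable in polynomial time, with a polynomial separation: $w\in A$ forces $f(d_w)>c+2^{-p(|w|)}$ and $w\notin A$ forces $f(d_w)<c-2^{-p(|w|)}$ for a fixed rational $c$ and polynomial $p$. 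Setting $\varphi_w:=(f(d_w)>c)$, a quantifier-free $\mathcal{L}_{\mathcal{F}}$-sentence since $d_w,c$ are rational constants, and $\delta_w:=2^{-p(|w|)}\in\mathbb{Q}^+$, one checks that $\varphi_w$ is $\delta_w$-robust and equivalent to ``$w\in A$'', so the algorithm of Theorem~\ref{main} on $(\varphi_w,\delta_w)$ returns $\mathsf{True}$ exactly when $w\in A$; this is a polynomial-time many-one reduction, giving hardness already for quantifier-free sentences. Alternatively, hardness follows by extending the QBF encoding of the preceding corollary to unbounded alternation depth (a reduction from $\mathrm{TQBF}$), the one delicate point there being that for a universally quantified variable the ``Booleanness'' constraint $x_i>0\vee{-x_i}>1$ must be attached as an antecedent rather than a conjunct, so that intermediate values of $x_i$ do not spuriously falsify the sentence, after which one checks $\delta$-robustness for any $\delta<1/2$.

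The step I expect to need the most care is the hardness reduction: extracting from the abstract $\mathsf{PSPACE}$-hardness of the solution operator a clean, \emph{polynomially separated} inequality $f(d_w)>c$ together with a $\delta$ that is itself computable in polynomial time in $|w|$. On the membership side the only point requiring attention is the scope of Theorem~\ref{compmain} — its hypothesis asks that the terms appearing in the sentences lie \emph{uniformly} in $\mathsf{C}$, which for this signature is precisely the assertion that finite compositions of Lipschitz solution operators stay in $\mathsf{PSPACE_{C[0,1]}}$ with a polynomial modulus; this holds here, but the reason (bounded Lipschitz constants) is worth spelling out explicitly in view of the Remark.
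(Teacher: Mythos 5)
Your upper bound is essentially the paper's: take $\mathsf{C}=\mathsf{PSPACE_{C[0,1]}}$, apply Theorem~\ref{compmain}, and collapse $\mathsf{(\Sigma_n^P)^{PSPACE}}=\mathsf{PSPACE}$ uniformly in $n$ (the paper phrases this as $\mathsf{NP^{PSPACE}=PSPACE}$ lifting all the $\Sigma_n$ levels). The extra care you take in verifying the hypothesis of Theorem~\ref{compmain} --- that whole terms, i.e.\ finite compositions of ODE solution operators with the base arithmetic functions, remain $\mathsf{PSPACE}$-computable with polynomial moduli because the moduli here are essentially linear --- is a detail the paper leaves implicit, and it is correct and worth spelling out in view of the Remark.

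On hardness your two routes are not on equal footing. The paper's argument is your first route, but it is best read as a polynomial-time \emph{Turing} reduction: ``subsumes solving any single ODE'' means that, using the $\delta$-decision oracle on sentences of the form $f(g(w))>c$ with $\delta=2^{-p(|w|)-1}$ and a binary search over polynomially many thresholds $c$, one produces some $e$ with $|e-f(g(w))|\leq 2^{-p(|w|)}$ and then applies $h(w,e)$ from Ko's hardness definition. Your attempted many-one strengthening --- a fixed rational $c$ with $w\in A$ forcing $f(g(w))>c+2^{-p(|w|)}$ and $w\notin A$ forcing $f(g(w))<c-2^{-p(|w|)}$ --- is not guaranteed by that definition: it only supplies the pair $(g,h)$, and $h$ may extract the answer from the approximation $e$ in a way that admits no single (let alone $w$-independent) threshold with a polynomial gap; making this precise would require opening up Kawamura's construction, which is exactly the step you flagged as delicate. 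Your alternative route, extending the QBF encoding of the preceding corollary to full TQBF, is sound and arguably cleaner: it uses only linear terms over the mandatory $0$, $-$, $+$, $|\cdot|$ and rational constants, hence is available in this signature; the resulting sentences are $\delta$-robust, so any correct $\delta$-decision algorithm decides them; and your observation that the Booleanness clause must be attached as an antecedent for universally quantified variables repairs a genuine wrinkle in the paper's own sketch. In short: membership matches the paper; for hardness, either drop the fixed-threshold claim and state the binary-search Turing reduction explicitly, or rely on the TQBF encoding.
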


\begin{proof}
Following Proposition 11, the problem is in $\mathsf{PSPACE}$ since $\mathsf{NP}^{\mathsf{PSPACE}}=\mathsf{PSPACE}$~\cite{phold}. Thus all the $\Sigma_n$-classes are lifted to $\mathsf{PSPACE}$. It is $\mathsf{PSPACE}$-hard since it subsumes solving any single ODE, which is itself a $\mathsf{PSPACE}$-complete problem.
\end{proof}

\section{Comparison with Negative Results}\label{negative}

We can contrast the above results with the following negative results, to show that both the boundedness and $\delta$-relaxation are necessary for decidability. We allow the signature $\mathcal{L}_{\mathcal{F}}$ to be arbitrary Type 2 computable functions, then without either boundedness or robustness, $\mathcal{L}_{\mathcal{F}}$-sentences are undecidable. 

\begin{proposition}
There exists $\mathcal{F}$ such that it is undecidable whether an arbitrary quantifier-free sentence (and thus trivially bounded) in $\mathcal{L}_{\mathcal{F}}$ is true. 
\end{proposition}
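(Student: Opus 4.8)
The plan is to exhibit a single computable function $f:\mathbb{R}\to\mathbb{R}$ that encodes the halting problem into its values at the integers, and to observe that, since a quantifier-free $\mathcal{L}_{\mathcal{F}}$-sentence is just a Boolean combination of atoms $t>0$, $t\geq 0$ with $t$ a closed term, and $\mathcal{F}$ is assumed to contain every rational constant, the single atomic sentence $f(\bar n)>0$ (with $\bar n$ the constant $n\in\mathbb{N}$) already suffices to transfer undecidability.

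\textbf{Construction of $f$.} Fix a standard enumeration $(M_n)_{n\in\mathbb{N}}$ of Turing machines, and fix a continuous ``tent'' $g:\mathbb{R}\to[0,1]$ with $g(0)=1$, support contained in $(-1/4,1/4)$, and Lipschitz constant $\leq 4$. For $n\in\mathbb{N}$ put
\[
c_n=\begin{cases}2^{-s}&\text{if }M_n\text{ halts in exactly }s\text{ steps},\\ 0&\text{if }M_n\text{ never halts},\end{cases}
\]
and define $f(x)=c_n\cdot g(x-n)$ for $x\in[n-1/4,n+1/4]$ and $f(x)=0$ for every $x$ at distance $\geq 1/4$ from $\mathbb{N}$. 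Since the intervals $[n-1/4,n+1/4]$ are pairwise disjoint, $f$ is well defined, continuous, satisfies $0\leq f\leq 1$, and is globally Lipschitz; and $f(n)=c_n>0$ if and only if $M_n$ halts.

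\textbf{Computability of $f$.} We sketch why $f$ is Type~2 computable. Given a name $\gamma_x$ of $x$ and a precision $i$, read $\gamma_x$ to some precision $2^{-p}$ with $p$ large compared to $i$. From this one either locates a unique integer $n$ with $x\in(n-1/4,n+1/4)$, so $x$ lies strictly inside one bump, or learns that $x$ is within $O(2^{-p})$ of the flat region, in which case, by the Lipschitz bound and $c_n\leq1$, a dyadic number of absolute value $O(2^{-p})$ is a $2^{-i}$-approximation of $f(x)$. In the first case simulate $M_n$ for $i+O(1)$ steps: if it halts, $c_n$ is then known exactly and we output a dyadic approximation of $c_n\,g(x-n)$; if it has not halted, then $c_n\in\{0\}\cup\{2^{-s}:s>i+O(1)\}$, hence $0\leq f(x)\leq c_n<2^{-i}$, and outputting a small dyadic value suffices. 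In every case a $2^{-i}$-approximation of $f(x)$ is produced. (This is the standard device for building a computable function that codes the halting set; it uses only that halting times are computable when they are finite.)

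\textbf{Conclusion, and the main obstacle.} Let $\mathcal{F}$ be the smallest collection of computable functions containing $0$, unary negation, $+$, $|\cdot|$, all rational constants, and $f$, closed under composition; every member is computable. For each $n$, the $\mathcal{L}_{\mathcal{F}}$-sentence $\varphi_n:\equiv f(\bar n)>0$ is quantifier-free, hence trivially bounded, and $\varphi_n$ is true iff $M_n$ halts; so any algorithm deciding the truth of quantifier-free $\mathcal{L}_{\mathcal{F}}$-sentences would decide the halting problem, which is impossible. The only non-routine point is the verification that $f$ is Type~2 computable: one must extract from finitely many digits of a name enough information to tell whether the argument sits inside a bump and, if so, which one, and then argue that a bounded simulation of $M_n$ still yields the requested output precision even though the exact value $c_n$ may never become known. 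The rest is bookkeeping.
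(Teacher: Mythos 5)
Your proposal is correct and takes essentially the same approach as the paper: the paper encodes halting behavior into computable reals $a_n$ added as constants (so that an atomic closed sentence about $a_n$ decides whether $M_n$ halts), and remarks immediately after its proof that one can instead use a single computable function taking these values at integer arguments --- which is exactly what your bump construction $f$ does, with the Type~2 computability of the interpolating function verified in detail. No gap; nothing further is needed.
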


\begin{proof}
Define $h_n: \mathbb{N}\rightarrow \mathbb{N}$ as $h_n(t)=1$ if the $n$-th Turing machine $M_n$ halts in $t$ steps, and 0 otherwise. Define
$$\gamma_n: \mathbb{N}\rightarrow \mathbb{Q},\ \gamma_n(k) = \sum_{i=1}^k h_n(i)\cdot2^{-i}.$$ 
Note that $\gamma_n$ is convergent and can be seen as a name of a real number $a_n$, and $a_n=0$ iff the machine $M_n$ halts. Thus, if $\{a_i:i\in \mathbb{N}\}\subseteq \mathcal{F}$, there does not exist an algorithm that can decide whether an arbitrary quantifier-free $\mathcal{L}_{\mathcal{F}}$-sentence of the form $a_i=0$ is true. 
\end{proof}

The proof of this proposition involves adding countably many constant symbols to the language, one for each $a_i$. Alternatively, it is not hard to define a single computable function $g : \mathbb Q \to \mathbb R$ such that for each $i \in \mathbb N$, $g(i) = a_i$, by interpolating outputs linearly for inputs between integer values.

\begin{proposition}
There exists $\mathcal{F}$ such that it is undecidable whether an arbitrary $\delta$-robust quantifier-free $\mathcal{L}_{\mathcal{F}}$-sentence is true. 
\end{proposition}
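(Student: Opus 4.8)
The plan is to follow the construction in the previous proposition almost verbatim, but to choose both the new symbols added to $\mathcal{F}$ and the sentences witnessing undecidability so that $\delta$-robustness holds \emph{by construction}. The obstacle to simply reusing the sentences ``$a_i=0$'' is that they fail to be robust exactly when they are false: if $M_i$ does not halt then $a_i=0$, and the $\delta$-weakening $a_i\geq-\delta$ is then true, so that family says nothing about the promise problem at hand. Instead, for each $n\in\mathbb{N}$ I would introduce a constant symbol $c_n$ interpreted as the real number $1$ if the $n$-th Turing machine $M_n$ halts and as $-1$ otherwise. Each $c_n$ denotes a (trivially) computable real; note that we neither need nor can have the assignment $n\mapsto c_n$ to be uniformly computable (that would decide the halting problem), but this is irrelevant, since $\mathcal{F}$ is only required to be \emph{some} collection of computable functions, exactly as in the previous proof. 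This is precisely the point where the single-function alternative $g:\mathbb{Q}\to\mathbb{R}$ mentioned after the previous proposition cannot be used. Set $\mathcal{F}$ to be the closure under the mandatory operations of $\{c_n:n\in\mathbb{N}\}$ together with the rational constants.

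Next I would fix any $\delta\in\mathbb{Q}^+$ with $\delta<1$ and, for each $n$, consider the quantifier-free $\mathcal{L}_{\mathcal{F}}$-sentence $\varphi_n$ given by $c_n>0$. The first key step is to check that each $\varphi_n$ is $\delta$-robust. Its $\delta$-weakening $\varphi_n^{-\delta}$ is $c_n>-\delta$. If $M_n$ halts, then $c_n=1$, so both $\varphi_n$ and $\varphi_n^{-\delta}$ are true; if $M_n$ does not halt, then $c_n=-1$, and since $\delta<1$ both are false. In either case $\varphi_n^{-\delta}\rightarrow\varphi_n$, so $\varphi_n$ is $\delta$-robust in the sense of Definition~\ref{rob}. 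This verification is entirely routine from the definition of $\delta$-weakening.

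Finally, the second key step is the reduction: $\varphi_n$ is true if and only if $c_n=1$ if and only if $M_n$ halts. Hence any algorithm that correctly decides the truth of every $\delta$-robust quantifier-free $\mathcal{L}_{\mathcal{F}}$-sentence would, applied to the inputs $\varphi_n$, decide whether $M_n$ halts, contradicting the undecidability of the halting problem. (The same argument goes through with $c_n\in\{0,1\}$ and $\varphi_n$ taken as $c_n-1\geq 0$, and the analogous statement for robustness under $\delta$-strengthening follows by the duality in Proposition~\ref{trivial}.) There is no genuinely hard step; the only thing requiring care, and the sole reason the previous proposition does not already suffice, is that the witnessing sentences must be chosen so that their false instances stay false under $\delta$-weakening — which is exactly why the two-valued encoding $c_n\in\{-1,1\}$, leaving a gap of width $2>2\delta$ around the threshold $0$, replaces the boundary-valued encoding $a_i\in\{0\}\cup(0,\infty)$.
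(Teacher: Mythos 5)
Your construction does prove the statement \emph{as literally worded}, but only by exploiting a loophole that places it outside the paper's framework, and it misses what the proposition is actually for. First, note that quantifier-free sentences are bounded (the paper's previous proposition even says ``and thus trivially bounded''), so a literal reading of this proposition collides head-on with Corollary~\ref{point}: bounded $\delta$-robust sentences are decidable. Your family $\{c_n>0\}$ escapes that corollary only because your signature is not effectively presented --- there is, as you yourself concede, no computable assignment of symbols $c_n$ to oracle machines computing their interpretations (such an assignment would decide halting). But the algorithm of Theorem~\ref{main} must, given the input string, obtain a machine for $\alpha(\varphi)$, so the theorem (and hence Corollary~\ref{point}) is only meaningful for signatures whose interpretation is effectively given; under your permissive reading of ``there exists $\mathcal{F}$'' the main theorem itself would be false for your $\mathcal{F}$. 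In other words, your ``undecidability'' is purely an artifact of hard-coding the halting problem into the \emph{naming} of the constants; it says nothing about robustness, boundedness, or real-valued logic. This is very different from the previous proposition in the paper, where the constants $a_n$ \emph{are} uniformly computable (simulate $M_n$), so the undecidability there is a genuine logical phenomenon (one cannot decide the sign of a computable real at the boundary), and the paper's follow-up remarks about replacing the constants by a single computable $g$, and $f_n$ by a single $h(y,x)$, make clear that uniformizability is part of the intended content. Your construction cannot be uniformized, which is precisely the red flag.

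Second, the word ``quantifier-free'' in the statement is evidently a slip: the paper's own proof uses the \emph{unbounded} $\Sigma_1$-sentences $\exists x.\, f_n(x)=r$ with $f_n(x)=a_n x$ and $r\neq 0$, explicitly notes that the argument breaks down if the quantifier is bounded (late-halting machines give tiny slopes $a_k$ that destroy robustness on a bounded domain), and the section concludes that \emph{boundedness} and robustness are both necessary. So the proposition's actual content is that robust but unbounded sentences are undecidable, with a uniformly computable signature; the key mechanism is that an unbounded quantifier lets an arbitrarily small nonzero slope reach $r$, making each sentence robust while its truth value still encodes halting. Your proof addresses neither the unbounded case nor the effectively presented setting, so it does not establish the result the paper needs here.
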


\begin{proof}
Let the set $\{a_i: i\in \mathbb{N}\}$ be defined as in the previous proof. Then the function $f_n(x) = a_nx$, which is computable since $a_n$ is computable, has the property that $f_n(x) = 0$ iff the $n$-th Turing machine halts, and $\exists x. f_n(x) = r$ for any $r\in \mathbb{R}$. This existential sentence is consequently $\delta$-robust for any $\delta$. Thus, there does not exist an algorithm that can decide whether an arbitrary $\delta$-robust bounded $\Sigma_1$-sentence of the form $\exists x. f_n(x)=r$ ($r\neq 0$) is true. Note that if we bound the quantifier $\exists x$, this proof does not go through. Because fixing any bound $x\leq u$ and $\delta\in \mathbb{Q}^+$, there exists an $a_k$ such that $a_k\cdot u<\delta$, which makes the formula not $\delta$-robust. Such an $a_k$ corresponds to a machine $k$ which may halt after $i$ steps, as long as $2^{-i}u<\delta$. 
\end{proof}

Again it is not hard to replace $f_n(x)$ by a single function $h(y,x)$.

Consequently, both boundedness and robustness are necessary for decidability of $\mathcal{L}_\mathcal{F}$-sentences, if we allow $\mathcal{F}$ to be arbitrary Type 2 computable functions. Moreover, we can ask the following questions. Given a restrict signature, say $\mathsf{P}$-computable functions including $\times$ and $\sin$, is it the case that without either boundedness or robustness, simple $\mathcal{L}_{\mathcal{F}}$-sentences are undecidable? Answering this should require explicit construction which is beyond the scope of this paper. We list them as questions here. 

\begin{question}
Suppose $\mathcal{F}$ contains $\{+,\times, \sin\}$ or a reasonable extension of it with natural $P$-computable functions. Is it undecidable whether an unbounded $\delta$-robust $\Sigma_1$-sentence in $\mathcal{L}_{\mathcal{F}}$ is true? Is it undecidable whether a bounded $\Sigma_1$-sentence is true?
\end{question}

It seems plausible that both questions can be answered affirmatively. For instance in \cite{Gra05robustsimulations}, it is proved that there exists a $\delta$-robust encoding of Turing machines using the signature only. In \cite{Laczkovich02theremoval}, a recent improvement on Richardson's theorem, it is proved that there exists a function $f$ obtainable from the signature such that it is undecidable whether it has a zero. 


\section{Discussion}
\label{discu}

\subsection{Applications} 

Our focus in the paper is to prove theoretical results to show the possibility of using numerical algorithms in solving hard decision problems over reals. In practice, our framework allows the use of various practical numerical techniques. What we have shown provides a framework of the general evaluation of numerical methods in the context of decision problems. Namely, to justify the use of a particular numerical method, we only need to prove that it can solve the $\delta$-decision problem correctly, and thus suitable for the corresponding applications. If this is the case, we call such a method ``$\delta$-complete''. Numerical methods that have the $\delta$-completeness guarantees should be regarded also suitable for correctness-critical problems such as formal verification and automated theorem proving, as shown in our work~\cite{ijcar12,DBLP:conf/fmcad/Gao10}. As an on-going project, we are using our theory to guide the implementation of a $\delta$-complete solver $\mathsf{dReal}$, and have observed promising results in applications. 

\subsection{Extensions}

We have studied the $\delta$-decision problem for bounded first-order sentences over the reals with computable functions. In fact, the theory of computable functions can be developed over any domain whose elements can be encoded as infinite strings over some finite alphabet. To show decidability of the $\delta$-decision problems, we exploit the compactness of the domain of the variables, and continuity of the computable functions over the domain. Thus, the same line of reasoning can be applied to general compact metric spaces other than the bounded real intervals, such as functions and sets. Such extensions can be useful, for instance, for showing decidability results for ($\delta$-versions of) control problems of dynamical systems, which can be expressed as first-order formulas in the corresponding domains. 

\section{Conclusion and Future Work}\label{conclude}

In this paper we defined a relaxed notion of decision problems for first-order sentences over reals. We allow a decision procedure to return answers that can have one-sided, bounded, numerical error. With this slight relaxation, which can be well-justified in practice, bounded sentences in many important but undecidable theories become decidable, with reasonable complexity. For instance, solving bounded existential sentences with exponential and sine functions become theoretically no harder than solving SAT problems, and solving the quantified sentences with Lipschitz-continuous ODEs are no harder than solving quantified Boolean formulas. We regard the implications of these theoretical results to be profound. The framework we proposed can also be directly used as a framework for guiding the use of numerical methods in decision solvers. In future work, it would be very interesting to see how this framework can be used in developing efficient SMT/SAT solvers and theorem provers. Also, the theoretical relation to approximations in complexity theory is worth investigating.

\section*{Acknowledgement}

We are grateful for many valuable suggestions from Lenore Blum and the anonymous reviewers.  

\bibliographystyle{abbrv}
\bibliography{tau}
\newpage

\end{document}